\IfFileExists{IEEEtran.cls}{%
  \documentclass[journal]{IEEEtran}
}{%
  \documentclass[10pt,twocolumn]{article}
  \usepackage[letterpaper,margin=0.67in,columnsep=0.24in]{geometry}
}
\providecommand{\IEEEPARstart}[2]{#1#2}

\usepackage{amsmath,amssymb,amsthm,mathtools}
\usepackage[T1]{fontenc}
\IfFileExists{IEEEtran.cls}{}{\usepackage{lmodern}}
\usepackage{xcolor}
\usepackage{balance}
\usepackage{float}
\usepackage{cite}
\usepackage{booktabs}
\usepackage{microtype}
\usepackage[hidelinks]{hyperref}

\newtheorem{theorem}{Theorem}
\newtheorem{proposition}[theorem]{Proposition}
\newtheorem{lemma}[theorem]{Lemma}
\newtheorem{corollary}[theorem]{Corollary}
\newtheorem{remark}{Remark}

\newcommand{\A}{\mathcal{A}}
\newcommand{\C}{C}

\newcommand{\D}{D}
\newcommand{\E}{\mathbb{E}}
\newcommand{\Z}{\mathbb{Z}}
\newcommand{\N}{\mathbb{Z}_{\geq 1}}
\newcommand{\Prb}{\mathbb{P}}
\newcommand{\supp}{\operatorname{supp}}
\newcommand{\myqed}{\hfill\(\blacktriangle\)}

\makeatletter
\@ifundefined{IEEEkeywords}{%
  \newenvironment{IEEEkeywords}
    {\par\smallskip\noindent\textbf{Index Terms---}}
    {\par\smallskip}
}{}
\makeatother

\title{The Capacity of a Family of Sticky Channels}

\author{Mladen~Kova\v{c}evi\'c%
\thanks{M.~Kova\v{c}evi\'c is with the Faculty of Technical Sciences,
University of Novi Sad, 21000 Novi Sad, Serbia
(e-mail: kmladen@uns.ac.rs).}
\thanks{This work was supported by the Ministry of Science, Technological Development
and Innovation of the Republic of Serbia (contract no. 451-03-34/2026-03/200156)
and by the Faculty of Technical Sciences, University of Novi Sad, Serbia (project
no. 01-3609/1).}}

\date{}

\hypersetup{
  pdftitle={The Capacity of a Family of Sticky Channels},
  pdfauthor={Mladen Kovacevic},
  pdfsubject={Exact capacities, domination, and compound Fuss-Catalan sticky channels},
  pdfkeywords={sticky channel, repeat channel, Shannon capacity,
  zero-error capacity, Fuss-Catalan numbers, compound distribution,
  capacity per unit cost}
}

\begin{document}

\maketitle

\begin{abstract}
\boldmath{
We determine the capacity of a family of \(q\)-ary
sticky-insertion channels. Fix \(q\geq2\) and \(d\geq1\), and let
\(\lambda\) be the unique positive solution of \(\lambda^d = (q-1) (\lambda^{d-1} + \cdots + \lambda + 1 )\). We prove that, for every
repetition law supported on \(1+d\mathbb Z_{\geq0}\) and satisfying a
coefficientwise-domination criterion with domination constant \(\gamma\geq\lambda^{-d}\), the Shannon capacity equals the zero-error capacity, both being
\(\log_2\lambda\) bits per symbol. We also exhibit explicit repetition laws satisfying these conditions, one of which is given by the weighted Fuss--Catalan numbers.  To the best of our knowledge, these are the first known cases of nontrivial repeat channels whose Shannon capacity has
been determined exactly.
}
\end{abstract}

\begin{IEEEkeywords}
Duplication channel,
repeat channel, sticky-insertion channel, synchronization errors,
capacity per unit cost, zero-error capacity, zero-error code,
Fuss--Catalan numbers, Lagrange distribution.
\end{IEEEkeywords}

\section{Introduction}

\IEEEPARstart{I}{nsertions}, deletions, and duplications, collectively known as synchronization errors, arise in
a range of communication and data storage systems and constitute a classical
challenge in information theory \cite{Mitzenmacher2009,CheraghchiRibeiro2021}.
Despite the simplicity of their definitions, the fundamental limits of the corresponding channels remain elusive, with exact capacity formulas being
exceptionally rare.
Dobrushin's coding theorem established an information-capacity
characterization under certain regularity conditions \cite{Dobrushin1967}, but even the binary deletion channel has resisted an
exact evaluation.  

Repeat channels are special cases of channels with synchronization errors in which each input symbol is replaced by a random number
of identical copies.  When the number of copies is positive almost surely,
the channel has no deletions and is called a \emph{sticky-insertion channel} or simply a \emph{sticky channel}.  Such a channel preserves the input run
symbols and randomizes only the run lengths.  Mitzenmacher \cite{Mitzenmacher2008} used this
observation to express the capacity of a sticky channel as the capacity per unit cost of
the corresponding run-length channel.  Numerical and analytical bounds on the capacity were developed in
\cite{DrineaMitzenmacher2007,IyengarSiegelWolf2016,KirschDrinea2010,
MercierTarokhLabeau2012,Mitzenmacher2008,RamezaniArdakani2013}.  Cheraghchi's
general KL-duality method \cite{Cheraghchi2019} and its specialization to sticky
channels by Cheraghchi and Ribeiro \cite{CheraghchiRibeiro2019} produced
sharp analytical upper bounds.
Recent work also gives efficient codes for
repeat channels with square-integrable laws
\cite{PerniceLiWootters2022} and improved bounds for the Poisson-repeat
channel \cite{KazemiDuman2023}.  For the standard random laws
considered in the literature, the exact capacity generally remains
unknown.
The zero-error capacity, however, has been determined for a broad class of sticky channels \cite{Kovacevic2019}, as well as for closely related tandem-duplication channels \cite{JainFarnoudSchwartzBruck2017,Kovacevic2019,Kovacevic2022}, which are of interest in DNA-based data storage systems.

In this paper, we present a class of sticky-insertion channel laws for which the capacity can be determined exactly.
Fix an
alphabet size \(q\geq2\), an integer \(d\geq1\), and let \(\lambda=\lambda_{q,d}\) be the unique positive
solution of
\begin{equation}
    \lambda^d = (q-1) (\lambda^{d-1} + \cdots + \lambda + 1 ) .
\label{eq:lambda-polynomial-intro}
\end{equation}
Denote by \(K\) the random number of copies an input symbol is replaced with in the channel, and by \(W_m(y)\) the probability that an input run of length \(m\) produces an output run of length \(y\). We prove that, if
\begin{equation}
    \supp(K)\subseteq1+d\mathbb Z_{\geq0}
\label{eq:arithmetic-support-intro}
\end{equation}
and the distributions \(W_m\) satisfy
\begin{equation}
    W_m(y)\geq\gamma W_{m+d}(y)
    \qquad \forall m,y\geq 1
\label{eq:domination-intro}
\end{equation}
for some \(\gamma\geq\lambda^{-d}\), then
\begin{equation}
  \C_q=\C_{0,q}=\log_2\lambda .
\label{eq:main-intro}
\end{equation}
An explicit
dual distribution demonstrates that, under the stated conditions, the vanishing-error capacity cannot exceed the rate of the
modulo-\(d\) zero-error construction.
Moreover, this construction is also explicit and yields maximum-cardinality zero-error
codes admitting efficient enumerative encoding and decoding.
We note that the condition \eqref{eq:arithmetic-support-intro} alone does not force the result; it provides the
zero-error code, but the output distributions of run lengths \(m\) and
\(m+d\) still carry different soft information.

We also characterize the class of laws satisfying \eqref{eq:arithmetic-support-intro} and \eqref{eq:domination-intro} and provide explicit examples. One example is the power law
\(\Prb(K=1+dk)\propto(k+1)^{-\alpha}\).
Another example is given by the following probability mass function, representing weighted Fuss--Catalan numbers:
\begin{equation}
\Prb(K=dk+1)
=\frac{1}{dk+1}\binom{(d+1)k}{k}(1-\beta)^{dk+1}\beta^k
\label{eq:pmf-intro}
\end{equation}
for \(k\geq0\), where \(\beta\in(0,\frac{1}{d+1}]\) is a parameter. For the family \eqref{eq:pmf-intro} we obtain a sharp characterization stating that \eqref{eq:main-intro} holds if and \emph{only if} \(\beta\geq\lambda^{-d}\). (At the endpoint
\(\beta=\frac{1}{d+1}\) the result remains
operationally valid even though the law has infinite mean.) Below this threshold the Shannon capacity \(\C_q\) strictly exceeds the zero-error capacity \(\C_{0,q}=\log_2\lambda\).

The results, in particular, answer the question posed by Cheraghchi and Ribeiro \cite{CheraghchiRibeiro2019} of finding a nontrivial repeat channel whose capacity can be determined exactly through their relative-entropy duality framework.

\section{Operational Model and Run-Length Reduction}
\label{sec:model}

All logarithms are to base two. Let \(\A\) be an alphabet of size
\(q\geq2\).  Let \(K\) be a random variable taking values in
\(\N\).
For \(x=x_1\cdots x_n\in\A^n\), the channel
output is
\begin{equation}
    Y=x_1^{K_1}x_2^{K_2}\cdots x_n^{K_n},
    \label{eq:sticky-channel}
\end{equation}
where \(K_1,\ldots,K_n\) are independent copies of \(K\), and \(x_i^k\)
denotes \(k\) consecutive copies of \(x_i\).

An \((n,N,\epsilon)\) code has \(N\) codewords in \(\A^n\) and average
decoding error at most \(\epsilon\) under equiprobable messages.  A rate
\(R\) is achievable if there is a sequence of such codes, indexed by \(j\),
for which \(n_j\to\infty\), \(\epsilon_j\to0\), and
\begin{equation}
    \liminf_{j\to\infty}\frac{1}{n_j}\log_2N_j\geq R.
    \label{eq:achievable-rate}
\end{equation}
The Shannon capacity \(\C_q\) is the supremum of achievable rates.

For \(x\in\A^n\), let \(\Gamma_n(x)\) be the support of the output
distribution induced by \(x\).  A code is zero-error if
\(\Gamma_n(x)\cap\Gamma_n(x')=\varnothing\) for every two distinct
codewords \(x, x'\).  If \(M_0(n)\) is the maximum size of such a code, the
zero-error capacity \cite{Shannon1956} is the
quantity
\begin{equation}
    \C_{0,q}=\limsup_{n\to\infty}\frac{1}{n}\log_2M_0(n).
    \label{eq:zero-error-definition}
\end{equation}
Clearly \(\C_{0,q}\leq\C_q\).

For an input run of length \(\ell\), define
\begin{equation}
    S_\ell=K_1+\cdots+K_\ell,
\label{eq:run-channel}
\end{equation}
let \(W_\ell\) be the distribution of \(S_\ell\),
\begin{equation}
    W_\ell(y)=\Prb(S_\ell=y) ,
\label{eq:run-channel2}
\end{equation}
and write
\begin{equation}
    G(z)=\E z^K,
    \qquad
    W_\ell(y)=[z^y]G(z)^\ell .
    \label{eq:run-pgf}
\end{equation}

The DMC \(\ell\mapsto W_\ell\), with input cost \(\ell\), is the
\emph{run-length channel}.  The following is a self-contained \(q\)-ary
form of the run-length reduction in \cite{Mitzenmacher2008}, using the
capacity-per-unit-cost framework of \cite{Verdu1990,AbdelGhaffar1993}.
We include a proof because we require the
\(q\)-ary version under the fixed-input-length operational definition,
without any moment assumption (such as finite mean) on the repetition law.

\begin{theorem}[Operational run-length reduction]
\label{thm:run-reduction}
For every \(q\geq2\) and every repetition law on \(\N\),
\begin{equation}
 \C_q
 =
 \sup_{\substack{P_L\\\mathrm{finite\ support}}}
 \frac{I(L;S_L)+\log_2(q-1)}{\E L}.
 \label{eq:qary-capacity-per-cost}
\end{equation}
\end{theorem}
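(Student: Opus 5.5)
We prove the two inequalities separately, working with the run decomposition of inputs: a word \(x\in\A^n\) is determined by its first symbol (\(q\) choices), the successive run symbols (each differing from its predecessor, \(q-1\) choices each), and the run-length vector \((\ell_1,\dots,\ell_r)\) with \(\sum\ell_i=n\). Because \(K\geq1\) almost surely, the output has exactly the same run symbols, and its run lengths are independent \(S_{\ell_1},\dots,S_{\ell_r}\), where each \(S_{\ell_i}\) has law \(W_{\ell_i}\). So the sticky channel is the run-length channel with a noiseless \(\log_2(q-1)\)-bit side channel attached to each run, and the input length is the total cost \(\sum \ell_i\). Denote the right-hand side of \eqref{eq:qary-capacity-per-cost} by \(C^*\).

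\emph{Achievability (\(\C_q\geq C^*\)).} Fix \(P_L\) with finite support \(\subseteq\{1,\dots,\ell_{\max}\}\) and small \(\delta>0\). Take \(r=\lfloor n/(\E L+\delta)\rfloor\) runs. Draw random i.i.d. codewords of run lengths \(L_1,\dots,L_r\sim P_L\), and keep only those with \(\sum L_i\leq n\); this holds with probability \(\to1\) by the law of large numbers. Each such word is padded to length exactly \(n\) by lengthening the last run; this changes at most one run and costs \(o(r)\) in information, since \(\ell_{\max}\) is finite. The run symbols are chosen uniformly subject to the alternation constraint. Standard joint-typicality decoding for the memoryless product channel \(L\mapsto(S_L,\text{symbol})\) then gives rate \(r(I(L;S_L)+\log_2(q-1))/n\to C^*\) as \(\delta\to0\). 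Equivalently, one can invoke Verdú's capacity-per-unit-cost theorem \cite{Verdu1990}. The point to check is that no moment assumption on \(K\) is needed here. This holds because the decoder only uses the output run lengths and the channel \(W_\ell\) over a finite input set, and mutual information is finite there since \(H(L)<\infty\).

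\emph{Converse (\(\C_q\leq C^*\)).} Let the message \(M\) be uniform over a code of size \(N\), with input \(X^n\) and output \(Y\). By Fano, \((1-\epsilon)\log N-1\leq I(X^n;Y)\). Split \(X^n\) into the run symbols \(U\) and the run-length vector \(\mathbf L=(L_1,\dots,L_R)\), where \(R\) is random. Since \(Y\) reveals \(U\) and \(R\) exactly, and since, given \(\mathbf L\), the output run lengths are independent and also independent of \(U\),
\[
I(X^n;Y)\leq H(U\mid R)+H(R)+I(\mathbf L;\mathbf S\mid R)\leq \E R\log_2(q-1)+\log_2 q+\log_2 n+\sum_i I(L_i;S_{L_i}\mid R).
\]
Let \(\bar P\) be the mixture of the per-run laws, with run index uniform and weighted by \(\E R\). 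Concavity of \(I\) in the input law then gives \(\sum_i I(L_i;S_{L_i}\mid R)\leq \E R\cdot I(\bar L;S_{\bar L})\). The mixture satisfies \(\E\bar L=n/\E R\), because \(\sum L_i=n\) deterministically. Hence
\[
\frac{I}{n}\leq \frac{I(\bar L;S_{\bar L})+\log_2(q-1)}{\E\bar L}+O\!\left(\frac{\log n}{n}\right).
\]
Here \(\bar L\) has support in \(\{1,\dots,n\}\), which is finite, so the bound is at most \(C^*+o(1)\). Letting \(\epsilon\to0\) and \(n\to\infty\) gives the converse.

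The main obstacle is the single-letterization in the converse. Three points need care. First, the number of runs \(R\) is random, so we must condition on it and pay \(\log n\) for it. Second, the per-run conditional mutual informations must be combined into one input law without using a finite mean of \(K\). This is fine because we only use concavity of \(I(P_L;W)\) in \(P_L\), and the bound \(\sum I(L_i;S_i)\ge I(\mathbf L;\mathbf S)\) for a memoryless channel. Third, the \(\log_2(q-1)\) term must be scaled by the same \(\E R\) as the mutual-information term. If the concavity step with random \(R\) gets messy, I would first handle each fixed value \(R=r\) separately and then average, using that \(x\mapsto\) sup-ratio is preserved under convex combinations of (information, cost) pairs.
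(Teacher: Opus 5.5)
Your proof is correct. The converse is the paper's in substance, while the achievability takes a different standard route.

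\textbf{Converse.} The paper conditions on the run count and the symbol sequence together. It then applies the defining inequality $I(L_i;S_i)+\log_2(q-1)\le c^\star\,\E L_i$ to each run and sums. That inequality is linear in the input law, so no concavity is needed. You condition on $R$ only. This is legitimate because $\mathbf S$ is conditionally independent of $U$ given $(R,\mathbf L)$, which gives $I(\mathbf L;\mathbf S\mid R,U)\le I(\mathbf L;\mathbf S\mid R)$; you should state this explicitly. You then merge the runs into one mixture law by concavity, which is correct but an unnecessary detour. Your fallback remark is exactly the paper's route.

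\textbf{Achievability.} The paper uses constant-composition codes of a rational type on a finitely quantized output. Every codeword then has cost exactly $n=m\E L$ and the problem becomes a finite-alphabet DMC, at the price of a continuity argument to reach irrational $P_L$. Your i.i.d.\ ensemble with slack $\delta$, expurgation and padding needs neither rationality nor continuity. However, the decoder should simply ignore the padded last run. Its length is of order $\delta n$, not at most $\ell_{\max}$, and its output law is no longer $W_{L_r}$. Discarding it costs at most $H(L)\le\log_2\ell_{\max}$ bits, which is the correct form of your ``$o(r)$'' remark.

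\textbf{One point to tighten.} For an arbitrary repetition law, $H(K)$, and hence $H(S_L)$, may be infinite. So entropy-based (weak) joint typicality is not available as stated. Likewise, the subadditivity $I(\mathbf L;\mathbf S\mid R)\le\sum_i I(L_i;S_{L_i}\mid R)$ should not be argued through output entropies. Decode instead by thresholding the information density $\log_2\bigl(W_\ell(y)/Q(y)\bigr)$. It is bounded above by $\log_2(1/P_L(\ell))$ and has finite mean $I(L;S_L)$, so the weak law of large numbers applies. Prove subadditivity by the relative-entropy identity with the product of the per-run output marginals as reference, as the paper does. Alternatively, quantize the output as the paper does.
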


\begin{IEEEproof}
Denote the right-hand side of \eqref{eq:qary-capacity-per-cost} by
\(c^\star\).  For achievability, first take a rational, finitely supported
distribution \(P_L\), and fix \(\delta>0\).  Since the run input alphabet
is finite, there is a finite output quantization \(\pi\) such that
\(I(L;\pi(S_L))\geq I(L;S_L)-\delta\).  For \(m\) tending to infinity
through multiples of the denominator of \(P_L\), the finite-output
constant-composition theorem \cite[Ch.~6]{CsiszarKorner2011} gives
length-\(m\) codes of type \(P_L\),
error tending to zero, and logarithmic size
\begin{equation}
    m\bigl(I(L;S_L)-\delta-o(1)\bigr).
\end{equation}
Every codeword has the same total input cost \(n=m\E L\).  Letting
\(\delta\downarrow0\) gives the desired run-length rate.

Independently choose any valid run-symbol sequence
\((A_1,\ldots,A_m)\), where adjacent symbols differ.  There are
\(q (q-1)^{m-1}\) such sequences, and the sticky channel reveals the chosen
sequence without error.  Combining the two codes gives rate
\begin{equation}
 \frac{I(L;S_L)-\delta+\log_2(q-1)-o(1)}{\E L}
 +\frac{\log_2(\frac{q}{q-1})}{m\E L}.
\end{equation}
For a fixed finite input support, mutual information is continuous in the
input distribution even when the output alphabet is countable: varying
\(P_L\) changes the joint law of \((L,S_L)\) in total variation, and
conditional entropy is continuous under total variation when \(L\) has a
fixed finite alphabet \cite[Ch.~3]{CsiszarKorner2011}.
Rational finite-support laws therefore approximate every rate in the
supremum, which proves \(\C_q\geq c^\star\).

For the converse, an arbitrary \(X^n\) has a unique run representation
\begin{equation}
    X^n\longleftrightarrow(M,A^M,L^M),
    \qquad \sum_{i=1}^{M}L_i=n.
\end{equation}
Because \(K\geq1\), the output representation is
\begin{equation}
    Y\longleftrightarrow(M,A^M,S^M);
\end{equation}
in particular, \(M\) and \(A^M\) are revealed exactly.  Thus
\begin{equation}
 I(X^n;Y)
 =
 H(M,A^M)+I(L^M;S^M\mid M,A^M).
\label{eq:information-decomposition}
\end{equation}
%
Condition on \(M=m\) and \(A^M=a^m\), and let
\begin{equation}
    Q_i=P_{S_i\mid m,a^m},
    \qquad
    Q^{(m)}=\bigotimes_{i=1}^m Q_i.
\end{equation}
Since the channel acts independently on the runs,
\begin{equation}
    P_{S^m\mid L^m,m,a^m}
    =\bigotimes_{i=1}^m W_{L_i}.
\end{equation}
The relative-entropy decomposition of mutual information therefore gives
\begin{align}
\nonumber
 I(L^m;S^m\mid m,a^m)
 &=
 \E\!\left[
 D\!\left(
 \bigotimes_{i=1}^m W_{L_i}
 \,\middle\Vert\,Q^{(m)}
 \right)
 \middle|m,a^m\right]\\
\nonumber
 &\quad{}
 -D\!\left(
 P_{S^m\mid m,a^m}
 \,\middle\Vert\,Q^{(m)}
 \right)\\
\nonumber
 &\leq
 \sum_{i=1}^m
 \E\!\left[
 D(W_{L_i}\Vert Q_i)
 \middle|m,a^m\right]\\
 &=
 \sum_{i=1}^m I(L_i;S_i\mid m,a^m).
\end{align}
This argument uses only relative entropy and therefore does not require
the output entropies to be finite.

Every conditional marginal \(L_i\) is supported on \(\{1,\ldots,n\}\).
The definition of \(c^\star\) therefore gives
\begin{equation}
 I(L^m;S^m\mid m,a^m)
 \leq c^\star n-m\log_2(q-1).
\end{equation}
Moreover,
\begin{equation}
 H(M,A^M)
 \leq\log_2n+\log_2q+(\E M-1)\log_2(q-1).
\label{eq:HM}
\end{equation}
Substitution into \eqref{eq:information-decomposition} gives the
fixed-block bound
\begin{equation}
 I(X^n;Y)
 \leq c^\star n+\log_2n+\log_2\!\Big(\frac{q}{q-1}\Big).
 \label{eq:fixed-block-capacity-bound}
\end{equation}
For an \((n,N,\epsilon)\) code, Fano's inequality and data processing give
\begin{equation}
 \frac{1}{n}\log_2N
 \leq
 \frac{c^\star}{1-\epsilon}
 +\frac{\log_2n+\log_2(\frac{q}{q-1})+h_2(\epsilon)}
 {n(1-\epsilon)}.
 \label{eq:finite-block-fano}
\end{equation}
Letting \(n\to\infty\) and \(\epsilon\to0\) proves
\(\C_q\leq c^\star\).
\end{IEEEproof}

The first run symbol contributes only the vanishing constant
\(\log_2 q\); each subsequent run contributes \(\log_2(q-1)\) noiseless
bits.  For \(q=2\), this term is zero and
\eqref{eq:qary-capacity-per-cost} reduces to the familiar binary formula
\cite{Mitzenmacher2008}.

We will use the following dual consequence in a form that already includes
the run-symbol contribution.  It is the KL-duality method of
\cite{Cheraghchi2019}, specialized to the run-length channel as in
\cite{CheraghchiRibeiro2019}; the contribution below is to identify an
explicit reference distribution for which the bound is exactly tight.

\begin{lemma}[Run-length dual bound]
\label{lem:qary-dual}
Suppose a probability distribution \(Q\) on \(\N\) and a number \(c\)
satisfy
\begin{equation}
    \D(W_\ell\Vert Q)
    \leq c\ell-\log_2(q-1)
    \qquad\forall\ell\geq1.
    \label{eq:qary-dual-condition}
\end{equation}
Then \(\C_q\leq c\).
\end{lemma}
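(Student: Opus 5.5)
The plan is to combine Theorem~\ref{thm:run-reduction} with the standard golden-formula (relative-entropy) upper bound on mutual information. By Theorem~\ref{thm:run-reduction}, it suffices to show that for every finitely supported input law \(P_L\) on \(\N\),
\begin{equation}
  I(L;S_L)+\log_2(q-1)\leq c\,\E L .
  \label{eq:dual-goal}
\end{equation}
Taking the supremum over \(P_L\) then gives \(\C_q\leq c\).

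To prove \eqref{eq:dual-goal}, fix \(P_L\) with finite support and let \(P_S=\sum_\ell P_L(\ell)W_\ell\) be the induced output law. I use the identity
\begin{equation}
  \sum_\ell P_L(\ell)\,\D(W_\ell\Vert Q)=I(L;S_L)+\D(P_S\Vert Q),
\end{equation}
which holds for any distribution \(Q\) on \(\N\). Since \(\D(P_S\Vert Q)\geq0\), this yields
\begin{equation}
  I(L;S_L)\leq\sum_\ell P_L(\ell)\D(W_\ell\Vert Q).
\end{equation}

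Some care is needed because the output alphabet is countably infinite. I will derive the identity by writing \(I(L;S_L)=\sum_\ell P_L(\ell)\D(W_\ell\Vert P_S)\). I then expand \(\log\frac{W_\ell}{Q}=\log\frac{W_\ell}{P_S}+\log\frac{P_S}{Q}\) termwise. Because \(L\) has finite support, every sum involved is a finite mixture.

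If some \(\D(W_\ell\Vert Q)\) is infinite, hypothesis \eqref{eq:qary-dual-condition} is violated, so I may assume all of them are finite. Then \(P_S\ll Q\) wherever some \(W_\ell\) has mass, and the split sums converge. If a term is infinite, the inequality holds trivially anyway. This is the only real subtlety, and it is why I would rather avoid entropies and argue purely with relative entropies, in the same spirit as the converse of Theorem~\ref{thm:run-reduction}. Alternatively, I can invoke lower semicontinuity and the convexity inequality \(\D(P_S\Vert Q)\leq\sum_\ell P_L(\ell)\D(W_\ell\Vert Q)\) to justify finiteness.

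Finally, plug in \eqref{eq:qary-dual-condition}:
\begin{equation}
  I(L;S_L)\leq\sum_\ell P_L(\ell)\bigl(c\ell-\log_2(q-1)\bigr)=c\,\E L-\log_2(q-1).
\end{equation}
This is exactly \eqref{eq:dual-goal}.

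A minor point is that \(\E L\) is finite because the support is finite, so dividing by \(\E L\geq1\) is legitimate. Theorem~\ref{thm:run-reduction} then gives \(\C_q\leq c\).
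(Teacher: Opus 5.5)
Your proof is correct, but it is organized differently from the paper's. You derive the lemma as a corollary of Theorem~\ref{thm:run-reduction}. Since its converse gives \(\C_q\leq c^\star\), it suffices to bound each single-letter ratio. The compensation identity \(\sum_\ell P_L(\ell)D(W_\ell\Vert Q)=I(L;S_L)+D(P_S\Vert Q)\) does this immediately for finitely supported \(P_L\). Your treatment of the countable output alphabet is adequate: every \(D(W_\ell\Vert Q)\) is finite by hypothesis, \(I(L;S_L)\leq H(L)<\infty\), and only a finite mixture over \(\ell\) is involved.

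The paper never passes through \(c^\star\). It reruns the block-level converse with the fixed product reference \(Q^{\otimes m}\) in place of the product \(Q^{(m)}\) of conditional marginals. With that reference, \(D(W_{L^m}\Vert Q^{\otimes m})=\sum_i D(W_{L_i}\Vert Q)\) tensorizes exactly. The paper then applies the hypothesis run by run and obtains the explicit bound \(I(X^n;Y)\leq cn+\log_2 n+\log_2\frac{q}{q-1}\) before invoking Fano.

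Your route is shorter and shows the lemma is plain weak duality for the single-letter capacity-per-unit-cost problem. The price is that it relies on the full converse of Theorem~\ref{thm:run-reduction}: subadditivity over runs, the bound on \(H(M,A^M)\), and Fano. The paper's route is self-contained in the KL-duality style and states the finite-block estimate directly in terms of \(c\). That estimate also follows from the bound in Theorem~\ref{thm:run-reduction} once \(c^\star\leq c\) is known, which is exactly what you prove.
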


\begin{IEEEproof}
For fixed \(M=m,A^M=a^m\), use \(Q^{\otimes m}\) as a reference output
distribution, and write
\(W_{L^m}=\bigotimes_{i=1}^{m}W_{L_i}\).  The relative-entropy identity
gives
\begin{align}
\nonumber
 I(L^m;S^m\mid m,a^m)
 &=
 \E\!\left[
 \D(W_{L^m}\Vert Q^{\otimes m})
 \,\middle|\,m,a^m\right]\\
\nonumber
 &\quad{}-\D(P_{S^m\mid m,a^m}\Vert Q^{\otimes m})\\
\nonumber
 &\leq
 \E\!\left[
 \sum_{i=1}^{m}\D(W_{L_i}\Vert Q)
 \,\middle|\,m,a^m\right]\\
 &\leq cn-m\log_2(q-1).
\end{align}
The first expectation is finite because each \(L_i\leq n\) and every
\(\D(W_\ell\Vert Q)\) has the finite bound
\eqref{eq:qary-dual-condition}; convexity then makes the subtracted
divergence finite as well.  Using this estimate and the identities \eqref{eq:information-decomposition} and \eqref{eq:HM} yields
\begin{equation}
    I(X^n;Y)
    \leq cn+\log_2n+\log_2\!\Big(\frac{q}{q-1}\Big).
\end{equation}
For an \((n,N,\epsilon)\) code, Fano's inequality and data processing give
\begin{equation}
    (1-\epsilon)\log_2N
    \leq I(X^n;Y)+h_2(\epsilon).
\end{equation}
Dividing by \(n\), and then letting \(n\to\infty\) and
\(\epsilon\to0\), completes the proof.
\end{IEEEproof}

\section{Zero-Error Capacity and a General Domination Criterion}
\label{sec:criterion}

Fix \(q\geq2\) and \(d\geq1\), and let \(\lambda=\lambda_{q,d}\geq1\)
be defined by
\begin{equation}
    (q-1)\sum_{r=1}^{d}\lambda^{-r}=1.
    \label{eq:lambda-definition}
\end{equation}
The left-hand side is strictly decreasing in \(\lambda>0\), so the solution
is unique.  The equality \(\lambda=1\) occurs only when \((q,d)=(2,1)\).
The first term in \eqref{eq:lambda-definition} also shows that
\(\lambda\geq q-1\).

\begin{theorem}[Coefficientwise-domination criterion]
\label{thm:domination}
Suppose
\begin{equation}
    \supp(K)\subseteq1+d\mathbb Z_{\geq0}
    \label{eq:arithmetic-support}
\end{equation}
and, for some \(\gamma\in(0,1]\),
\begin{equation}
    W_m(y)\geq\gamma W_{m+d}(y)
    \qquad\forall m,y\geq1 .
    \label{eq:general-domination}
\end{equation}
Then
\begin{align}
\label{eq:ze}
 &\log_2\lambda
 = \C_{0,q}  \\
 &\leq\C_q
 \leq
 \min\left\{\log_2q,\,
 \max\left\{\log_2\lambda,\,
 \frac{1}{d}\log_2\frac{1}{\gamma}\right\}\right\}.
\label{eq:general-bound}
\end{align}
In particular, if \(\gamma\geq\lambda^{-d}\), then
\begin{equation}
    \C_q=\C_{0,q}=\log_2\lambda.
\label{eq:general-exact}
\end{equation}
\end{theorem}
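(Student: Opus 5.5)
The plan is to prove the three pieces of the statement in turn: the value \(\C_{0,q}=\log_2\lambda\), the trivial inequality \(\C_{0,q}\leq\C_q\), and the upper bound in \eqref{eq:general-bound}. The upper bound is the main part, and I would obtain it by exhibiting an explicit dual distribution and applying Lemma~\ref{lem:qary-dual}.

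\emph{Zero-error capacity.} By \eqref{eq:arithmetic-support}, every \(K_i\equiv1\pmod d\), so \(S_\ell\equiv\ell\pmod d\) almost surely. The sticky channel also reveals the run symbols exactly. Hence the output determines the sequence of run symbols together with the residues of all input run lengths modulo \(d\).

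For achievability, I take as a code all \(x\in\A^n\) whose run lengths lie in \(\{1,\ldots,d\}\). This code is zero-error, because such an \(x\) is recovered from its run symbols and its run-length residues. Its size is \([z^n]\) of
\[
q\,u(z)\big/\bigl(1-(q-1)u(z)\bigr),\qquad u(z)=\sum_{r=1}^d z^r .
\]
The smallest positive singularity is the root of \((q-1)u(z)=1\), namely \(z=1/\lambda\) by \eqref{eq:lambda-definition}. Therefore the code size grows as \(\lambda^{n}\) up to subexponential factors; since the coefficients are positive, a standard argument gives exactly rate \(\log_2\lambda\).

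For the converse, I note that \eqref{eq:general-domination} implies \(\supp W_{\ell+jd}\subseteq\supp W_\ell\) for all \(j\geq0\). Now take two inputs of length \(n\) with the same run symbols and the same run-length residues mod \(d\). For each run \(i\), pick \(y_i\) in the support of the larger of \(W_{\ell_i},W_{\ell'_i}\); this \(y_i\) then lies in both supports. So the two outputs share a common sequence, and the inputs are confusable. It follows that a zero-error code injects into the set of reduced words, meaning run symbols with run lengths in \(\{1,\ldots,d\}\) and total length at most \(n\). That set has size at most \(\sum_{k\leq n}[z^k](\cdots)=O(n\lambda^n)\), which gives \(\C_{0,q}\leq\log_2\lambda\).

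\emph{Upper bound.} The bound \(\C_q\leq\log_2q\) follows from Fano's inequality, since \(I(X^n;Y)\leq H(X^n)\leq n\log_2q\). For the main bound, let \(c=\max\{\log_2\lambda,\frac1d\log_2\frac1\gamma\}\) and \(\mu=2^c\), so that \(\mu\geq\lambda\) and \(\mu^d\geq\gamma^{-1}\). Define the sub-probability measure
\[
\tilde Q=\sum_{r=1}^d(q-1)\mu^{-r}W_r .
\]
Its total mass is at most \(1\) by \eqref{eq:lambda-definition}, because \(\mu\geq\lambda\). I complete \(\tilde Q\) to a probability \(Q\geq\tilde Q\) by placing any leftover mass on some point of \(\N\). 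This can only decrease each \(\D(W_\ell\Vert Q)\).

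Now write any \(\ell\geq1\) as \(\ell=r+jd\) with \(r\in\{1,\ldots,d\}\) and \(j\geq0\). Iterating \eqref{eq:general-domination} gives \(W_\ell(y)\leq\gamma^{-j}W_r(y)\). Hence
\[
\frac{W_\ell(y)}{Q(y)}\leq\frac{\gamma^{-j}W_r(y)}{(q-1)\mu^{-r}W_r(y)}\leq\frac{\mu^{jd+r}}{q-1}=\frac{\mu^\ell}{q-1}
\]
on \(\supp W_\ell\). Taking logarithms and averaging over \(W_\ell\) gives
\[
\D(W_\ell\Vert Q)\leq c\ell-\log_2(q-1),
\]
which is exactly \eqref{eq:qary-dual-condition}. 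Lemma~\ref{lem:qary-dual} then yields \(\C_q\leq c\).

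\emph{The case \(\gamma\geq\lambda^{-d}\).} Here \(\frac1d\log_2\frac1\gamma\leq\log_2\lambda\), so \(c=\log_2\lambda\). Combining this with \(\C_{0,q}=\log_2\lambda\leq\C_q\) gives \eqref{eq:general-exact}.

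The step I expect to need the most care is the zero-error converse. There I must make sure the support inclusion from domination really produces a common output for inputs of equal length, and that the count of reduced words, which may have any total length up to \(n\), is still \(\lambda^{n+o(n)}\). The dual bound itself is a short computation once the choice of \(\tilde Q\) is made.
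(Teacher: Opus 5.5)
Your proposal is correct and follows the paper's proof essentially step for step. The zero-error value comes from the modulo-$d$ signature, with support nesting from domination making two words with the same signature confusable. The upper bound comes from the same residue-weighted mixture $\sum_{r=1}^{d}(q-1)\mu^{-r}W_r$ fed into Lemma~\ref{lem:qary-dual}. The only differences are cosmetic: you sandwich $M_0(n)$ between the number of words with run lengths in $\{1,\ldots,d\}$ and the number of reduced words of length at most $n$, rather than computing it exactly. You also complete the sub-probability mixture with a point mass instead of normalizing it by $Z$.
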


\begin{IEEEproof}
We first determine the zero-error capacity. The argument is an adaptation of that from \cite{Kovacevic2019} to the present setting; we nonetheless give the full proof for completeness.

Write an input word in run form as
\begin{equation}
    x=a_1^{\ell_1}\cdots a_m^{\ell_m},
    \qquad a_i\neq a_{i+1}.
\end{equation}
Since every repetition count is positive, the sticky channel preserves the
run-symbol sequence \((a_1,\ldots,a_m)\).  Moreover,
\eqref{eq:arithmetic-support} implies
\begin{equation}
    \supp(W_\ell)
    \subseteq
    \ell+d\mathbb Z_{\geq0}.
\end{equation}
Thus output-length distributions corresponding to input run lengths in
different residue classes modulo \(d\) have disjoint supports.

On the other hand, since \(\gamma>0\),
\eqref{eq:general-domination} implies
\begin{equation}
    \supp(W_{r+d})\subseteq\supp(W_r)
    \qquad\forall r\geq1.
\end{equation}
Iterating this inclusion gives
\begin{equation}
    \supp(W_{r+jd})\subseteq\supp(W_r)
    \qquad \forall r\geq1,\ j\geq0.
\end{equation}
Consequently, two input runs (of the same symbol) are confusable if and only if their lengths are
congruent modulo \(d\). Indeed, as noted in the previous paragraph, different residues give disjoint output
supports.  Conversely, if \(\ell'\geq\ell\) and
\(\ell'\equiv\ell\pmod d\), then
\(\supp(W_{\ell'})\subseteq\supp(W_\ell)\), so the two distributions have
a nonempty common support.

It follows that two input words
\begin{equation}
    x=a_1^{\ell_1}\cdots a_m^{\ell_m}
    \quad\text{and}\quad
    x'={a'_1}^{\ell'_1}\cdots {a'_{m'}}^{\ell'_{m'}}
\end{equation}
are confusable if and only if
\begin{equation}
    m=m'\quad\text{and}\quad
    a_i=a'_i,\quad
    \ell_i\equiv\ell'_i\pmod d
    \quad\forall i.
\end{equation}
To verify the converse explicitly, suppose these conditions hold and put
\begin{equation}
    t_i=\max\{\ell_i,\ell'_i\}.
\end{equation}
The support nesting above implies
\begin{equation}
    \supp(W_{t_i})
    \subseteq
    \supp(W_{\ell_i})\cap\supp(W_{\ell'_i}).
\end{equation}
Choose any \(y_i\in\supp(W_{t_i})\).  Then the output word
\begin{equation}
    a_1^{y_1}\cdots a_m^{y_m}
\end{equation}
has positive probability under both inputs, because the channel acts
independently on the runs.

For each run length \(\ell_i\), let \(r_i\in\{1,\ldots,d\}\) be its unique
representative modulo \(d\), with \(r_i=d\) representing residue zero.  The
confusability class of \(x\) is therefore completely specified by the
signature
\begin{equation}
    \sigma(x)
    =
    \bigl((a_1,\ldots,a_m),(r_1,\ldots,r_m)\bigr).
\end{equation}
In particular, confusability is an equivalence relation.  A zero-error code
can contain at most one word from each signature realized at length \(n\),
while choosing one representative from every realized signature produces a
zero-error code.  Hence \(M_0(n)\) equals the number of signatures realized
by words of length \(n\).
Estimating this number follows a standard combinatorial argument, which is given in the Appendix, implying
\begin{equation}
\label{eq:M0}
    \limsup_{n\to\infty}
    \frac{1}{n}\log_2M_0(n)
    =
    \log_2\lambda.
\end{equation}
This proves \eqref{eq:ze}.

Let us now derive the upper bound on the Shannon capacity stated in \eqref{eq:general-bound}.
Set
\begin{equation}
\begin{split}
 c&=\max\left\{\log_2\lambda,\,
 \frac{1}{d}\log_2\frac{1}{\gamma}\right\},\\
 t&=2^{-c},\qquad
 Z=(q-1)\sum_{r=1}^{d}t^r.
\end{split}
\label{eq:general-dual-parameters}
\end{equation}
Then \(Z\leq1\) and \(t^d\leq\gamma\).  Define
\begin{equation}
    p_r=\frac{(q-1) t^r}{Z},\quad 1\leq r\leq d,
    \qquad Q=\sum_{r=1}^{d}p_rW_r.
    \label{eq:general-dual}
\end{equation}
The distributions \(W_1,\ldots,W_d\) occupy distinct residue classes
modulo \(d\).  Write \(\ell=r+jd\), where \(1\leq r\leq d\) and
\(j\geq0\).  Iterating \eqref{eq:general-domination} gives
\(W_r\geq\gamma^jW_\ell\).  On the support of \(W_\ell\), only the
\(r\)-th component of \(Q\) contributes, so \(Q(y)=p_rW_r(y)\).
Hence
\begin{equation}
 \frac{W_\ell(y)}{Q(y)}
 \leq\frac{Z}{q-1}t^{-r}\gamma^{-j}
 \leq\frac{1}{q-1}t^{-(r+jd)}
 =\frac{1}{q-1}2^{c\ell}.
\label{eq:general-likelihood-bound}
\end{equation}
Taking logarithms and averaging gives
\begin{equation}
    \D(W_\ell\Vert Q)\leq c\ell-\log_2(q-1).
\end{equation}
Lemma~\ref{lem:qary-dual} then implies \(\C_q\leq c\), and the trivial input
alphabet bound gives \(\C_q\leq\log_2q\).  This proves
\eqref{eq:general-bound}. 

If \(\gamma\geq\lambda^{-d}\), then
\(c=\log_2\lambda\), matching the zero-error lower bound and proving \eqref{eq:general-exact}.
\end{IEEEproof}

The candidate input and reference output distributions at rate
\(\log_2\lambda\) are
\begin{subequations}
\label{eq:exact-dual}
\begin{align}
        P_0(L=r)&=p_r=(q-1)\lambda^{-r},\\
        Q_0&=(q-1)\sum_{r=1}^{d}\lambda^{-r}W_r.
\end{align}
\end{subequations}
The supports of \(W_1,\ldots,W_d\) are disjoint, so
\(I_{P_0}(L;S_L)=H(P_0)\).  Moreover,
\begin{align}
 I_{P_0}(L;S_L)+\log_2(q-1)
 &=H(P_0)+\log_2(q-1)\nonumber\\
 &=\E_{P_0} L\log_2\lambda.
\label{eq:optimal-run-entropy}
\end{align}
Thus the modulo-\(d\) code attains \(\log_2\lambda\) by using
run lengths \(1,\ldots,d\) with probabilities \(p_r\), while each next
run symbol carries \(\log_2(q-1)\) additional bits.

\begin{proposition}[Exact KL criterion]
\label{prop:exact-kl-criterion}
Under the hypotheses of Theorem~\ref{thm:domination}, let \(P_0\) and
\(Q_0\) be given by \eqref{eq:exact-dual}.  For
\(\ell=r+jd\), where \(1\leq r\leq d\) and \(j\geq0\), define
\begin{equation}
    \Delta_\ell
    \coloneq
    \D(W_\ell\Vert Q_0)
    -\ell\log_2\lambda
    +\log_2(q-1).
    \label{eq:delta-definition}
\end{equation}
Then
\begin{equation}
    \C_q=\C_{0,q}=\log_2\lambda
    \quad\Longleftrightarrow\quad
    \Delta_\ell\leq0
    \quad\forall\ell\geq1.
    \label{eq:exact-kl-characterization}
\end{equation}
Equivalently, the condition on the right is
\begin{equation}
    \D(W_{r+jd}\Vert W_r)
    \leq jd\log_2\lambda
    \qquad
    \forall\,1\leq r\leq d,\ j\geq1.
    \label{eq:within-residue-kl-condition}
\end{equation}
\end{proposition}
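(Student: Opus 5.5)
First compute \(\Delta_\ell\) explicitly; this also gives the equivalence between the condition on the right of \eqref{eq:exact-kl-characterization} and \eqref{eq:within-residue-kl-condition}. Write \(\ell=r+jd\). On \(\supp(W_\ell)\subseteq r+d\Z_{\geq0}\), only the \(r\)-th component of \(Q_0\) is nonzero, so \(Q_0(y)=(q-1)\lambda^{-r}W_r(y)\). Hence
\begin{equation*}
\D(W_\ell\Vert Q_0)=\D(W_\ell\Vert W_r)-\log_2(q-1)+r\log_2\lambda ,
\end{equation*}
and therefore \(\Delta_\ell=\D(W_{r+jd}\Vert W_r)-jd\log_2\lambda\). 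This divergence is finite, because the domination hypothesis gives \(W_r\geq\gamma^jW_\ell\). For \(j=0\) we get \(\Delta_r=0\) automatically. So \(\Delta_\ell\leq0\) for all \(\ell\) is exactly \eqref{eq:within-residue-kl-condition}.

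\emph{Sufficiency.} If \(\Delta_\ell\leq0\) for all \(\ell\), then \(\D(W_\ell\Vert Q_0)\leq \ell\log_2\lambda-\log_2(q-1)\). Lemma~\ref{lem:qary-dual} with \(c=\log_2\lambda\) then gives \(\C_q\leq\log_2\lambda\). Combined with \(\C_{0,q}=\log_2\lambda\leq\C_q\) from Theorem~\ref{thm:domination}, this yields equality.

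\emph{Necessity.} This is the main obstacle. We need a perturbation argument around the optimizer \(P_0\) in Theorem~\ref{thm:run-reduction}. Suppose \(\Delta_{\ell^*}>0\) for some \(\ell^*=r+jd\) with \(j\geq1\). For small \(\epsilon>0\), take the finitely supported input law \(P_\epsilon=(1-\epsilon)P_0+\epsilon\,\delta_{\ell^*}\), and write \(F(P)=I_P(L;S_L)+\log_2(q-1)-\E_P L\,\log_2\lambda\). We have \(F(P_0)=0\) by \eqref{eq:optimal-run-entropy}. It suffices to show \(F(P_\epsilon)>0\) for small \(\epsilon\), since then the ratio in \eqref{eq:qary-capacity-per-cost} exceeds \(\log_2\lambda\), so \(\C_q>\log_2\lambda\).

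To get the derivative, use the standard identity \(I_P(L;S_L)=\sum_\ell P(\ell)\D(W_\ell\Vert Q_0)-\D(P_S\Vert Q_0)\), valid because all the divergences involved are finite, and note that \(P_{0,S}=Q_0\). Then
\begin{equation*}
F(P_\epsilon)=(1-\epsilon)\sum_\ell P_0(\ell)\Delta_\ell+\epsilon\Delta_{\ell^*}-\D(P_{\epsilon,S}\Vert Q_0)=\epsilon\Delta_{\ell^*}-\D(P_{\epsilon,S}\Vert Q_0),
\end{equation*}
since \(\Delta_r=0\) on \(\supp P_0\). It remains to show \(\D(P_{\epsilon,S}\Vert Q_0)=o(\epsilon)\). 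Here \(P_{\epsilon,S}=Q_0+\epsilon(W_{\ell^*}-Q_0)\), and \(W_{\ell^*}\leq C\,Q_0\) with \(C=\gamma^{-j}\lambda^{r}/(q-1)\). Hence the likelihood ratio \(1+\epsilon(g-1)\), where \(g=W_{\ell^*}/Q_0\), is bounded. Using \(\D\leq\chi^2\), we get \(\D(P_{\epsilon,S}\Vert Q_0)\leq\epsilon^2\chi^2(W_{\ell^*}\Vert Q_0)\leq\epsilon^2 C\). Therefore \(F(P_\epsilon)\geq\epsilon\Delta_{\ell^*}-C\epsilon^2>0\) for small \(\epsilon\), which finishes the proof of necessity.
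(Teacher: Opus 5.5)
Your proposal is correct and follows the paper's proof essentially step for step. You use the same residue computation \(Q_0(y)=(q-1)\lambda^{-r}W_r(y)\) to get \(\Delta_{r+jd}=\D(W_{r+jd}\Vert W_r)-jd\log_2\lambda\), the same sufficiency argument via Lemma~\ref{lem:qary-dual}, and for necessity the same perturbation \(P_\epsilon=(1-\epsilon)P_0+\epsilon\delta_{\ell^*}\) with the relative-entropy identity and a \(\chi^2\) bound; your functional \(F\) is just the numerator of the paper's rate-excess identity. The only slip is cosmetic: since logarithms are base two, the inequality should read \(\D\leq(\log_2 e)\chi^2\), and this merely changes the constant in the \(O(\epsilon^2)\) term.
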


\begin{IEEEproof}
By the disjoint residue supports and the support nesting in
Theorem~\ref{thm:domination}, one has, on
\(\supp(W_{r+jd})\),
\begin{equation}
    Q_0(y)=(q-1)\lambda^{-r}W_r(y).
    \label{eq:exact-dual-on-residue}
\end{equation}
Consequently,
\begin{equation}
    \Delta_{r+jd}
    =
    \D(W_{r+jd}\Vert W_r)-jd\log_2\lambda,
    \label{eq:delta-within-residue}
\end{equation}
which proves the equivalence between the two displayed conditions.
Moreover, domination gives
\(W_{r+jd}/W_r\leq\gamma^{-j}\) on its support, so every divergence in
\eqref{eq:delta-within-residue}, and hence every \(\Delta_\ell\), is
finite.  In particular, \(\Delta_r=0\) for \(1\leq r\leq d\).

If \(\Delta_\ell\leq0\) for every \(\ell\), then
\eqref{eq:delta-definition} is precisely
\eqref{eq:qary-dual-condition} with \(Q=Q_0\) and
\(c=\log_2\lambda\).  Lemma~\ref{lem:qary-dual}, together with
Theorem~\ref{thm:domination}, therefore gives
\(\C_q=\C_{0,q}=\log_2\lambda\).

Conversely, suppose that \(\Delta_{\ell_0}>0\) for some \(\ell_0\).
Perturb the input-run law according to
\begin{equation}
    P_\epsilon=(1-\epsilon)P_0+\epsilon\delta_{\ell_0}
    \label{eq:general-run-perturbation}
\end{equation}
and define
\begin{equation}
    R_{\ell_0}(\epsilon)
    =
    \frac{I_{P_\epsilon}(L;S_L)+\log_2(q-1)}
         {\E_{P_\epsilon}L}.
    \label{eq:general-perturbed-rate}
\end{equation}
Let
\(Q_\epsilon=(1-\epsilon)Q_0+\epsilon W_{\ell_0}\) be the output law
induced by \(P_\epsilon\).  Applying the relative-entropy identity with
reference distribution \(Q_0\) gives
\begin{align}
 I_{P_\epsilon}(L;S_L)
 &=(1-\epsilon)I_{P_0}(L;S_L)
   +\epsilon\D(W_{\ell_0}\Vert Q_0)\nonumber\\
 &\quad{}-\D(Q_\epsilon\Vert Q_0).
 \label{eq:general-mutual-information-identity}
\end{align}
Splitting \(\log_2(q-1)\) into its \(1-\epsilon\) and \(\epsilon\)
parts and using \eqref{eq:optimal-run-entropy} yields the exact identity
\begin{align}
 R_{\ell_0}(\epsilon)-\log_2\lambda
 &=
 \frac{\epsilon\Delta_{\ell_0}-\D(Q_\epsilon\Vert Q_0)}
 {(1-\epsilon)\E_{P_0}L+\epsilon\ell_0}.
 \label{eq:general-rate-excess}
\end{align}
If \(\ell_0=r+jd\), the same domination bound and
\eqref{eq:exact-dual-on-residue} give the finite constant
\begin{equation}
 B\coloneq
 \sup_{y\in\supp(W_{\ell_0})}
 \frac{W_{\ell_0}(y)}{Q_0(y)}
 \leq
 \frac{\lambda^r}{(q-1)\gamma^j}.
 \label{eq:perturbation-likelihood-bound}
\end{equation}
Since \(Q_\epsilon-Q_0=\epsilon(W_{\ell_0}-Q_0)\),
\begin{align}
 \D(Q_\epsilon\Vert Q_0)
 &\leq(\log_2 e)\chi^2(Q_\epsilon\Vert Q_0)\nonumber\\
 &=(\log_2 e)\epsilon^2
   \chi^2(W_{\ell_0}\Vert Q_0)\nonumber\\
 &\leq(\log_2 e)(B-1)\epsilon^2,
 \label{eq:general-mixture-divergence}
\end{align}
where the last inequality follows from
\(\chi^2(W_{\ell_0}\Vert Q_0)
=\sum_y W_{\ell_0}(y)^2/Q_0(y)-1\leq B-1\).
It follows from \eqref{eq:general-rate-excess} that
\(R_{\ell_0}(\epsilon)>\log_2\lambda\) for all sufficiently small
\(\epsilon>0\).  Theorem~\ref{thm:run-reduction} therefore implies
\(\C_q>\log_2\lambda\), proving the converse.
\end{IEEEproof}

The KL condition \eqref{eq:within-residue-kl-condition} is an exact
optimality test, whereas coefficientwise domination is a checkable
pointwise sufficient condition.  Indeed,
\(W_r\geq\gamma^jW_{r+jd}\) implies
\begin{equation}
    \D(W_{r+jd}\Vert W_r)
    \leq j\log_2\frac{1}{\gamma}.
    \label{eq:pointwise-implies-kl}
\end{equation}
Thus \(\gamma\geq\lambda^{-d}\) implies
\eqref{eq:within-residue-kl-condition}.

\begin{remark}[Soft information]
\textnormal{
The distributions \(W_m\) and \(W_{m+d}\) are distinct, so output
lengths generally reveal more than a residue modulo \(d\).  Condition
\eqref{eq:general-domination} bounds the pointwise likelihood advantage from
adding \(d\) input symbols by \(1/\gamma\).  When
\(1/\gamma\leq\lambda^d\), that advantage cannot compensate for the cost of
the extra symbols at rate \(\log_2\lambda\).
\myqed}
\end{remark}

\begin{remark}[Domination forces unbounded support]
\textnormal{
Condition \eqref{eq:general-domination} cannot hold for a finitely
supported repetition law.  Indeed, if
\(M=\max\operatorname{supp}(K)\), then
\begin{equation}
    (d+1)M\in\operatorname{supp}(W_{d+1}).
\end{equation}
On the other hand, \eqref{eq:general-domination} with \(m=1\) and
\(\gamma>0\) implies
\begin{equation}
    \operatorname{supp}(W_{d+1})
    \subseteq\operatorname{supp}(W_1)
    =\operatorname{supp}(K),
\end{equation}
which would require \((d+1)M\leq M\), a contradiction.  Thus every
repetition law covered by Theorem~\ref{thm:domination} has unbounded
support.  In particular, arithmetic support alone does not imply
coefficientwise domination: any finitely supported law on
\(1+d\mathbb Z_{\geq0}\) violates it.
\myqed}
\end{remark}

\begin{remark}[Constructive capacity-achieving codes]
\textnormal{
The proof of Theorem~\ref{thm:domination} is constructive at every blocklength.  For each
signature
\(\sigma=((a_1,\ldots,a_m),(r_1,\ldots,r_m))\) realized at length
\(n\), let
\(j=(n-\sum_{i=1}^m r_i)/d\), and choose as its representative the
word
\(a_1^{r_1+jd}a_2^{r_2}\cdots a_m^{r_m}\).
The resulting code \(\mathcal C_n\) contains exactly one representative
from every confusability class and is therefore optimal,
\(\lvert\mathcal C_n\rvert=M_0(n)\).  Consequently, when
\(\gamma\geq\lambda^{-d}\), an appropriate sequence of these
maximum-cardinality zero-error codes achieves the Shannon capacity
\(\log_2\lambda\).
}

\textnormal{
These codes also admit efficient enumerative encoding and decoding.
The recurrence for \(T_k\) given in the Appendix permits the reduced
words, whose run lengths belong to \(\{1,\ldots,d\}\), to be ranked and
unranked by dynamic programming.  To decode a received word
\(a_1^{s_1}\cdots a_m^{s_m}\), one replaces each \(s_i\) by the unique
\(r_i\in\{1,\ldots,d\}\) congruent to \(s_i\) modulo \(d\), thereby
recovering the transmitted signature, and then applies the inverse
ranking procedure.  For fixed \(q\) and \(d\), the ranking and
unranking require \(O(n)\) arithmetic operations after \(O(n)\)
preprocessing, on integers having \(O(n)\) bits; parsing the received
word additionally requires time proportional to its length.
\myqed}
\end{remark}

\section{Fuss--Catalan and Compound Repetition Laws}
\label{sec:fuss-catalan}

\subsection{The Primitive Weighted Fuss--Catalan Law}

Fix \(d\geq1\) and \(\beta>0\), and let \(G=G_{d,\beta}\) be the formal
power-series solution of
\begin{equation}
    G(z)=(1-\beta)z+\beta G(z)^{d+1},
    \qquad G(0)=0.
    \label{eq:algebraic-pgf}
\end{equation}

In the proper range identified below, the coefficients generated by
\eqref{eq:algebraic-pgf} belong to a classical family of probability distributions.  More precisely,
after the affine reindexing
\begin{equation}
    J_m=\frac{S_m-m}{d},
\end{equation}
the law of \(J_m\) is the Jain--Consul generalized negative-binomial
distribution with parameters \(n=m\), \(b=d+1\), and \(p=\beta\)
\cite{JainConsul1971,ConsulGupta1980}.  This distribution is a standard
Lagrangian probability law
\cite{ConsulShenton1972,Good1975}; see also
\cite[Ch.~10]{ConsulFamoye2006}.  Equivalently, \(S_m\) is the number of
leaves in a Galton--Watson forest with \(m\) roots and offspring numbers
in \(\{0,d+1\}\).  We record the parameter conversion because the channel
uses the leaf count \(S_m\), rather than the usual generalized
negative-binomial variable \(J_m\).

\begin{proposition}[Repetition and run-length laws]
\label{prop:fuss-catalan-law}
The coefficients of \(G_{d,\beta}\) form a probability distribution if and
only if
\begin{equation}
    0<\beta\leq\frac{1}{d+1}.
    \label{eq:proper-range}
\end{equation}
In this range,
\begin{equation}
\begin{split}
 \Prb(K=dk+1)
 &=
 \frac{1}{dk+1}\binom{(d+1)k}{k}\\
 &\quad{}\times(1-\beta)^{dk+1}\beta^k,
 \qquad k\geq0,
\end{split}
\label{eq:fuss-catalan-pmf}
\end{equation}
with zero mass elsewhere.  Moreover, for every \(m\geq1\),
\begin{equation}
\begin{split}
 W_m(m+dk)
 &=
 \frac{m}{m+dk}
 \binom{m+(d+1)k-1}{k}\\
 &\quad{}\times(1-\beta)^{m+dk}\beta^k,
 \qquad k\geq0.
\end{split}
\label{eq:run-pmf}
\end{equation}
For \(0<\beta<\frac{1}{d+1}\),
\begin{equation}
    \E K=\frac{1-\beta}{1-(d+1)\beta}.
    \label{eq:mean-k}
\end{equation}
At \(\beta=\frac{1}{d+1}\), \(K\) is finite almost surely but
\(\E K=\infty\).
\end{proposition}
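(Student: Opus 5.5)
Our plan is to apply Lagrange inversion to \eqref{eq:algebraic-pgf}. We substitute \(G=zF\) into \(G=(1-\beta)z+\beta G^{d+1}\). Equivalently, we write \(G=z\varphi(G)\), where the map \(w\mapsto z\) is given by \(w/z=\dots\). The cleanest route is to set \(u=(1-\beta)z\), so that \(G=u+\beta G^{d+1}\), i.e.\ \(G=u\,\phi(G)\) with \(\phi(w)=1/(1-\beta w^d)\) in the sense of \(G(1-\beta G^{d})=u\). The Lagrange--Bürmann formula then gives
\[
[u^{y}]G^m=\frac{m}{y}[w^{y-m}]\bigl(1-\beta w^d\bigr)^{-y}.
\]
This vanishes unless \(y=m+dk\), in which case it equals \(\frac{m}{m+dk}\binom{m+dk+k-1}{k}\beta^k\). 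Multiplying by \((1-\beta)^{m+dk}\) yields \eqref{eq:run-pmf}, and the case \(m=1\) gives \eqref{eq:fuss-catalan-pmf}, since \(\frac{1}{1+dk}\binom{(d+1)k}{k}\) is the \(m=1\) instance. Because \(W_m(y)=[z^y]G^m\) by \eqref{eq:run-pgf}, this settles the formulas as formal power-series identities, valid for every \(\beta>0\).

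Next comes the probability characterization. For \(0<\beta<1\) all coefficients are nonnegative; for \(\beta\geq1\) the factor \((1-\beta)\) makes them zero or of alternating sign, so those values are excluded. The sum of the coefficients is \(G(1)\), where we evaluate the series at \(z=1\) by Abel/monotone convergence. The nonnegative power series converges at \(z=1\) to the smallest nonnegative root \(g\) of \(g=(1-\beta)+\beta g^{d+1}\), either by the Galton--Watson extinction interpretation or by monotone iteration from \(0\). We then study \(f(g)=(1-\beta)+\beta g^{d+1}-g\). We have \(f(1)=0\) and \(f'(1)=(d+1)\beta-1\), and \(f\) is convex. If \((d+1)\beta\leq1\), then \(f>0\) on \([0,1)\), so the smallest root is \(1\) and the total mass is \(1\). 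If \((d+1)\beta>1\), then \(f(1)=0\) with \(f'(1)>0\) and \(f(0)=1-\beta>0\), so a root lies in \((0,1)\) and the mass is less than \(1\). The main obstacle is justifying that the coefficient sum equals the smallest fixed point. We handle this via the branching-process picture: \(K\) is the leaf count of a Galton--Watson tree in which each node has \(d+1\) children with probability \(\beta\) and is a leaf otherwise. Its generating function satisfies \eqref{eq:algebraic-pgf}, so \(\Prb(K<\infty)\) is the extinction probability, which is the smallest root of \(s=(1-\beta)+\beta s^{d+1}\) and equals \(1\) exactly when the mean offspring \((d+1)\beta\leq1\). The coefficient sum is exactly \(\Prb(K<\infty)\).

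For the mean, we differentiate \eqref{eq:algebraic-pgf} to get \(G'=(1-\beta)+(d+1)\beta G^dG'\). At \(z\uparrow1\) with \(G(1)=1\) this gives \(G'(1)=(1-\beta)/(1-(d+1)\beta)\) when the denominator is positive. We justify the limit by monotonicity: \(G'(z)=(1-\beta)/(1-(d+1)\beta G(z)^d)\) holds for \(z<1\), and both sides increase as \(z\uparrow1\). At \(\beta=1/(d+1)\), the same formula shows \(G'(z)\to\infty\) because \(G(z)\to1\), so \(\E K=\infty\), while \(K<\infty\) a.s.\ since we are in the critical extinction case. Alternatively, Stirling applied to \eqref{eq:fuss-catalan-pmf} gives \(\Prb(K=dk+1)\asymp k^{-3/2}\) at criticality.
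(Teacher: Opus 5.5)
Your proposal is correct. It differs from the paper mainly in how you obtain the coefficient formulas \eqref{eq:run-pmf} and \eqref{eq:fuss-catalan-pmf}.

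The paper first identifies \(G\) with the leaf-count PGF of the Galton--Watson tree with offspring in \(\{0,d+1\}\). It then reads off \([z^{m+dk}]G^m\) from Dwass's hitting-time formula, applied to a forest with \(m\) roots and \(N=m+(d+1)k\) vertices. You instead apply Lagrange--B\"urmann inversion to \(G(1-\beta G^d)=u\) with \(u=(1-\beta)z\). That is a purely algebraic computation, valid as a formal identity for every \(\beta\), and it shows the support \(m+d\mathbb Z_{\geq0}\) immediately.

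The two routes are really the same identity, since Dwass's formula is the probabilistic form of Lagrange inversion. The paper's version uses one branching model for both the coefficients and the total mass, and it lines up directly with the Jain--Consul parameterization. Yours separates the algebra from the probability and needs the tree only for the mass.

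For the mass and the mean you follow the paper: the extinction probability, then \(G'(z)=(1-\beta)/(1-(d+1)\beta G(z)^d)\) with \(z\uparrow1\). Your convexity analysis of \(f(g)\) and the monotone-iteration remark make the extinction criterion self-contained rather than cited. The Stirling estimate \(\Prb(K=dk+1)\asymp k^{-3/2}\) is a valid independent check that \(\E K=\infty\) at criticality.

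Two small points:
\begin{itemize}
\item When you invoke the tree, you need uniqueness of the zero-constant-term formal solution of \eqref{eq:algebraic-pgf} to conclude that the tree's (defective) leaf PGF is \(G\). The paper states this explicitly.
\item For \(\beta>1\) the coefficients are not alternating when \(d\) is even: then \(dk+1\) is always odd, so every coefficient is negative. The negative \(z\)-coefficient \(1-\beta\) already suffices for exclusion.
\end{itemize}
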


\begin{IEEEproof}
The coefficient of \(z\) in \(G\) is \(1-\beta\).  Thus \(\beta>1\)
cannot produce a probability distribution, while for \(\beta=1\) the
formal solution with \(G(0)=0\) is identically zero.  It remains to
consider \(0<\beta<1\).

Let \(\xi\) be an offspring variable satisfying
\begin{equation}
    \Prb(\xi=0)=1-\beta,
    \qquad
    \Prb(\xi=d+1)=\beta,
\end{equation}
and consider the corresponding ordered Galton--Watson tree.  If \(H(z)\)
is the possibly defective generating function of the number of leaves in
the finite tree, decomposition at the root gives
\begin{equation}
    H(z)=(1-\beta)z+\beta H(z)^{d+1},
    \qquad H(0)=0.
\end{equation}
The formal solution is unique, so \(H=G\).  Consequently, \(G(z)^m\)
is the generating function of the number of leaves in a forest with
\(m\) independent roots.

Dwass's formula \cite{Dwass1969} states that, if \(T_m\) is the
total number of vertices in such a forest, then
\begin{equation}
    \Prb(T_m=N)
    =
    \frac{m}{N}
    \Prb\left(\xi_1+\cdots+\xi_N=N-m\right).
\end{equation}
A forest with \(m\) roots and \(k\) internal vertices has
\(m+(d+1)k\) vertices and \(m+dk\) leaves.  Taking
\(N=m+(d+1)k\) in the preceding formula therefore gives
\begin{align}
\nonumber
 [z^{m+dk}]G(z)^m
 &=
 \frac{m}{m+(d+1)k}
 \binom{m+(d+1)k}{k}\\
\nonumber
 &\quad{}\times
 (1-\beta)^{m+dk}\beta^k\\
\nonumber
 &=
 \frac{m}{m+dk}
 \binom{m+(d+1)k-1}{k}\\
 &\quad{}\times
 (1-\beta)^{m+dk}\beta^k.
\end{align}
This is \eqref{eq:run-pmf}; setting \(m=1\) gives
\eqref{eq:fuss-catalan-pmf}.  In terms of
\(J_m=(S_m-m)/d\), the first expression is exactly the
Jain--Consul parameterization stated above.  The admissible parameter
range for that distribution was clarified in
\cite{ConsulGupta1980}.

The coefficients of \(G\) sum to the probability that the branching
process becomes extinct.  Its mean offspring number is
\(\E\xi=(d+1)\beta\), and the standard Galton--Watson extinction
criterion \cite{AthreyaNey1972} shows that extinction
occurs almost surely exactly when
\((d+1)\beta\leq1\).  This proves \eqref{eq:proper-range}.

Finally, differentiating \eqref{eq:algebraic-pgf} gives
\begin{equation}
    G'(z)
    =
    \frac{1-\beta}
    {1-(d+1)\beta G(z)^d}.
\end{equation}
In the subcritical case \(G(1)=1\), which yields
\eqref{eq:mean-k}.  At \(\beta=\frac{1}{d+1}\), extinction still occurs almost
surely, so \(K\) is finite almost surely, but \(G(z)\uparrow1\) as
\(z\uparrow1\) and hence \(G'(z)\to\infty\).  Therefore
\(\E K=\infty\) at criticality.
\end{IEEEproof}

The factor
\begin{equation}
    \frac{1}{dk+1}\binom{(d+1)k}{k}
\end{equation}
is the Fuss--Catalan number counting full ordered \((d+1)\)-ary trees
with \(k\) internal vertices and \(dk+1\) leaves
\cite{Raney1960,FlajoletSedgewick2009}.  Thus ``weighted
Fuss--Catalan'' describes the coefficients in
\eqref{eq:fuss-catalan-pmf}, while probabilistically the reindexed
variable \((K-1)/d\) is the Jain--Consul generalized negative-binomial
variable.

\subsection{Complete Characterization of the Domination Class}

We next show that the coefficientwise condition in
Theorem~\ref{thm:domination} has a complete algebraic description.  For
power series \(A\) and \(B\), write \(A\succeq B\) when every coefficient
of \(A-B\) is nonnegative.  To distinguish the primitive law from a
general repetition-count probability-generating function (PGF), set
\begin{equation}
    F_{d,\gamma}(u)\coloneq G_{d,\gamma}(u),
    \label{eq:primitive-notation}
\end{equation}
so that \(F_{d,\gamma}\) is the zero-constant-term formal solution of
\begin{equation}
    F_{d,\gamma}(u)
    =(1-\gamma)u+\gamma F_{d,\gamma}(u)^{d+1}.
    \label{eq:primitive-gamma-pgf}
\end{equation}

\begin{proposition}[Complete characterization of domination]
\label{prop:compound-characterization}
Let \(G\) be a probability-generating function supported on
\(1+d\mathbb Z_{\geq0}\), and fix \(\gamma\in(0,1]\).  The following
conditions are equivalent:
\begin{enumerate}
\item
\begin{equation}
    [z^y]G(z)^m
    \geq
    \gamma[z^y]G(z)^{m+d}
    \qquad\forall m,y\geq1;
    \label{eq:compound-condition-all-m}
\end{equation}
\item
\begin{equation}
    G(z)-\gamma G(z)^{d+1}\succeq0;
    \label{eq:compound-condition-m-one}
\end{equation}
\item one has \(\gamma\leq\frac{1}{d+1}\), and there is a unique
probability-generating function \(R\), supported on
\(1+d\mathbb Z_{\geq0}\), such that
\begin{equation}
    G(z)=F_{d,\gamma}(R(z)).
    \label{eq:compound-characterization}
\end{equation}
\end{enumerate}
Whenever these equivalent conditions hold, necessarily \(\gamma<1\), and
the PGF in the third condition is
\begin{equation}
    R(z)
    =
    \frac{G(z)-\gamma G(z)^{d+1}}{1-\gamma}.
    \label{eq:base-pgf-recovery}
\end{equation}
Moreover,
\begin{equation}
    G(z)
    =
    \sum_{k\geq0}
    \frac{1}{dk+1}\binom{(d+1)k}{k}
    \gamma^k(1-\gamma)^{dk+1} R(z)^{dk+1}.
\label{eq:compound-fuss-expansion}
\end{equation}
\end{proposition}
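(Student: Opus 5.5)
The plan is to prove the cycle (1)\(\Rightarrow\)(2)\(\Rightarrow\)(3)\(\Rightarrow\)(1), and then read off the formulas for \(R\) and the Fuss--Catalan expansion.

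\emph{(1)\(\Rightarrow\)(2).} Take \(m=1\) in \eqref{eq:compound-condition-all-m}. This gives \([z^y](G-\gamma G^{d+1})\geq0\) for all \(y\geq1\). The constant term is zero because \(G(0)=0\), so \(G-\gamma G^{d+1}\succeq0\).

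\emph{(2)\(\Rightarrow\)(1).} Observe that
\begin{equation*}
G^m-\gamma G^{m+d}=G^{m-1}\bigl(G-\gamma G^{d+1}\bigr).
\end{equation*}
This is the product of a series with nonnegative coefficients and a series that is \(\succeq0\). Hence it is \(\succeq0\), which is exactly \eqref{eq:compound-condition-all-m}.

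\emph{(2)\(\Rightarrow\)(3).} I will set \(N=G-\gamma G^{d+1}\succeq0\). Evaluating at \(z=1\) by Abel summation gives total mass \(N(1)=1-\gamma\geq0\).
\begin{itemize}
\item If \(\gamma=1\), then \(N\succeq0\) with zero total mass forces \(N=0\). So \(G=G^{d+1}\), which is impossible for a PGF with \(G(0)=0\) and nonzero linear coefficient: the lowest-order term of \(G\) has degree \(e\), while that of \(G^{d+1}\) has degree \((d+1)e>e\). Hence \(\gamma<1\).
\item For \(\gamma<1\), define \(R=N/(1-\gamma)\). This is a PGF with \(R(0)=0\). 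Since \(G\) is supported on \(1+d\mathbb Z_{\geq0}\), so is \(G^{d+1}\), and therefore so is \(R\).
\item The relation \(G=(1-\gamma)R+\gamma G^{d+1}\) says that \(G\) solves the defining equation \eqref{eq:primitive-gamma-pgf} with \(u=R\). Uniqueness of the zero-constant-term formal solution (coefficients are determined recursively, since \(R\) has no constant term) gives \(G=F_{d,\gamma}(R)\).
\item Uniqueness of \(R\) follows because any representation \(G=F_{d,\gamma}(R)\) forces \((1-\gamma)R=G-\gamma G^{d+1}\), which is \eqref{eq:base-pgf-recovery}.
\end{itemize}

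The constraint \(\gamma\leq\frac{1}{d+1}\) is the step I expect to be the main obstacle. I plan to use \(G(1)=1\) together with the extinction criterion of Proposition~\ref{prop:fuss-catalan-law}. Composition gives \(F_{d,\gamma}(R(1))=F_{d,\gamma}(1)\), so the coefficient sum of \(F_{d,\gamma}\) must equal \(1\). For \(\gamma>\frac{1}{d+1}\), the extinction probability, equal to \(F_{d,\gamma}(1)\), is strictly less than \(1\). Then \(G(1)<1\), a contradiction. Here I need to justify evaluating at \(1\) through nonnegative coefficients and monotone convergence, which is legitimate since all series involved have nonnegative coefficients.

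\emph{(3)\(\Rightarrow\)(2).} Substituting \(u=R\) in \eqref{eq:primitive-gamma-pgf} gives \(G-\gamma G^{d+1}=(1-\gamma)R\succeq0\).

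Finally, for the expansion \eqref{eq:compound-fuss-expansion}, I will use \eqref{eq:fuss-catalan-pmf} with \(\beta=\gamma\), which writes \(F_{d,\gamma}(u)\) as
\begin{equation*}
F_{d,\gamma}(u)=\sum_{k\geq0}\frac{1}{dk+1}\binom{(d+1)k}{k}\gamma^k(1-\gamma)^{dk+1}u^{dk+1}.
\end{equation*}
This is valid in the proper range. I then substitute \(u=R(z)\). The composition is formally well defined because \(R(0)=0\), and it converges coefficientwise by nonnegativity.
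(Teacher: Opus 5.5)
Your proof is correct and follows the paper's skeleton step for step: (1)$\Leftrightarrow$(2) via $G^m-\gamma G^{m+d}=G^{m-1}(G-\gamma G^{d+1})$, exclusion of $\gamma=1$ by lowest exponents, recovery of $R$ by \eqref{eq:base-pgf-recovery}, formal uniqueness of the solution of $U=(1-\gamma)R+\gamma U^{d+1}$, and substitution into \eqref{eq:fuss-catalan-pmf}. The one step where you diverge is the bound $\gamma\le\frac{1}{d+1}$.

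The paper gets this bound directly from \eqref{eq:compound-recursion}. For $0\le z<1$, since $R(z)\le1$, one has $1-G(z)\ge\gamma(1-G(z))(1+G(z)+\cdots+G(z)^d)$. Dividing by $1-G(z)>0$ and letting $z\uparrow1$ gives $\gamma(d+1)\le1$.

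You instead evaluate $G=F_{d,\gamma}\circ R$ at $z=1$ to obtain $F_{d,\gamma}(1)=1$, and then appeal to Proposition~\ref{prop:fuss-catalan-law}. That is valid. For every $\gamma\in(0,1)$ the recursion gives $F_{d,\gamma}$ nonnegative coefficients, so the interchange you flag is justified by Tonelli. For $\gamma>\frac{1}{d+1}$ the coefficient sum is then not $1$, which gives the contradiction.

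Your route makes the structural reason visible: a compound law is proper only if its outer Fuss--Catalan law is. However, it imports the Galton--Watson extinction theorem. The paper's inequality is self-contained, and with $R(z)=z$ it even reproves the ``only if'' half of \eqref{eq:proper-range} without branching processes.

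One small slip: when excluding $\gamma=1$ you assume a ``nonzero linear coefficient.'' That need not hold, since $\Prb(K=1)$ may vanish. Your lowest-exponent argument with general $e\ge1$ is what is actually needed, and it suffices.
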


\begin{IEEEproof}
Condition \eqref{eq:compound-condition-all-m} with \(m=1\) is precisely
\eqref{eq:compound-condition-m-one}.  Conversely, the factorization
\begin{equation}
\begin{aligned}
    G(z)^m-\gamma G(z)^{m+d}
    &=
    G(z)^{m-1}
    \bigl(G(z)-\gamma G(z)^{d+1}\bigr)
\end{aligned}
    \label{eq:domination-factorization}
\end{equation}
shows that \eqref{eq:compound-condition-m-one} implies
\eqref{eq:compound-condition-all-m}, because \(G^{m-1}\) has nonnegative
coefficients.  Thus it is enough to test the domination inequality at
\(m=1\).

The value \(\gamma=1\) cannot satisfy
\eqref{eq:compound-condition-m-one}.  Indeed, \(G-G^{d+1}\) would have
nonnegative coefficients and coefficient sum zero, and hence
\begin{equation}
    G(z)=G(z)^{d+1}.
    \label{eq:gamma-one-impossible}
\end{equation}
This is impossible because the least positive exponent occurring in
\(G^{d+1}\) is \(d+1\) times the least positive exponent occurring in
\(G\).  We may therefore assume \(0<\gamma<1\) and define \(R\) by
\eqref{eq:base-pgf-recovery}.  Condition
\eqref{eq:compound-condition-m-one} makes its coefficients nonnegative,
and
\begin{equation}
    R(1)
    =
    \frac{G(1)-\gamma G(1)^{d+1}}{1-\gamma}
    =1.
    \label{eq:base-pgf-normalization}
\end{equation}
Both \(G\) and \(G^{d+1}\) are supported on
\(1+d\mathbb Z_{\geq0}\), since \(d+1\equiv1\pmod d\).
Consequently, \(R\) is a PGF with the required support.

Rearranging its definition gives
\begin{equation}
    G(z)=(1-\gamma)R(z)+\gamma G(z)^{d+1}.
    \label{eq:compound-recursion}
\end{equation}
For \(0\leq z<1\), this identity implies
\begin{align}
\nonumber
    1-G(z)
    &=(1-\gamma)(1-R(z))
      +\gamma(1-G(z)^{d+1})\\
    &\geq
      \gamma(1-G(z))
      \bigl(1+G(z)+\cdots+G(z)^d\bigr).
\label{eq:compound-properness-bound}
\end{align}
Since \(G(z)<1\) for \(z<1\), division by \(1-G(z)\), followed by
\(z\uparrow1\), yields
\begin{equation}
    \gamma(d+1)\leq1.
    \label{eq:compound-gamma-range}
\end{equation}

The zero-constant-term formal solution \(U\) of
\begin{equation}
    U(z)=(1-\gamma)R(z)+\gamma U(z)^{d+1},
    \label{eq:compound-formal-equation}
\end{equation}
is unique.  Indeed, writing \(U(z)=\sum_{n\geq1}u_nz^n\), the coefficient
of \(z^n\) in \(U(z)^{d+1}\) depends only on
\(u_1,\ldots,u_{n-d}\).  Hence \eqref{eq:compound-formal-equation}
determines the coefficients \(u_n\) successively.  Substitution of \(R(z)\)
into \eqref{eq:primitive-gamma-pgf} therefore shows that the solution is
\(F_{d,\gamma}(R(z))\).  Comparing with
\eqref{eq:compound-recursion} proves \eqref{eq:compound-characterization}
and also the uniqueness of \(R\) for fixed \(G\) and \(\gamma\).

Conversely, suppose \(0<\gamma\leq\frac{1}{d+1}\) and \(R\) is any PGF
supported on \(1+d\mathbb Z_{\geq0}\).  By
Proposition~\ref{prop:fuss-catalan-law}, \(F_{d,\gamma}\) is a proper PGF.
Hence \(G=F_{d,\gamma}\circ R\) is a proper PGF with the required support.
Equation \eqref{eq:primitive-gamma-pgf} gives
\eqref{eq:compound-recursion}, so
\(G-\gamma G^{d+1}=(1-\gamma)R\succeq0\).  Finally,
\eqref{eq:compound-fuss-expansion} follows by substituting \(R(z)\) into
the coefficient formula \eqref{eq:fuss-catalan-pmf}.
\end{IEEEproof}

We call the laws in Proposition~\ref{prop:compound-characterization}
\emph{compound weighted Fuss--Catalan repetition laws}.  Their
largest domination constant, for the fixed span \(d\), is an intrinsic
parameter of the law.  For a PGF \(G\) admitting positive domination,
define
\begin{equation}
    \gamma_d^\star(G)
    \coloneq
    \inf_{\{y \colon [z^y]G(z)^{d+1}>0\}}
    \frac{[z^y]G(z)}
         {[z^y]G(z)^{d+1}}.
    \label{eq:optimal-domination-constant}
\end{equation}
The equivalence of the first two conditions in
Proposition~\ref{prop:compound-characterization} shows that
\(\gamma_d^\star(G)\) is the largest admissible constant, and that all
positive admissible constants form the interval
\((0,\gamma_d^\star(G)]\).

The probabilistic representation of a compound law is especially simple.
Let \(N\) have PGF
\(F_{d,\gamma}\), let \(X_1,X_2,\ldots\) be independent with PGF \(R\),
and assume that \(N\) and the \(X_i\)'s are independent.  Then
\begin{equation}
    K=X_1+\cdots+X_N,
    \label{eq:compound-random-sum}
\end{equation}
has PGF \eqref{eq:compound-characterization}.  This random-sum and
branching interpretation is part of the classical Lagrange-distribution
framework \cite{ConsulShenton1972,Good1975,ConsulFamoye2006}; the point
relevant here is that this representation is equivalent to the channel
domination condition.  When \(\gamma<\frac{1}{d+1}\) and \(\E X_1<\infty\),
\begin{equation}
    \E K
    =
    \frac{1-\gamma}{1-(d+1)\gamma}\,\E X_1.
    \label{eq:compound-mean}
\end{equation}
At \(\gamma=\frac{1}{d+1}\), the mean of \(N\), and therefore that of \(K\),
is infinite.

\begin{corollary}[Capacity of the compound class]
\label{cor:compound-capacity}
Every compound law in Proposition~\ref{prop:compound-characterization}
satisfies
\begin{equation}
    \C_{0,q}=\log_2\lambda_{q,d}.
    \label{eq:compound-zero-error-capacity}
\end{equation}
If
\(\gamma_d^\star(G)\geq\lambda_{q,d}^{-d}\), then
\begin{equation}
    \C_q=\C_{0,q}=\log_2\lambda_{q,d}.
    \label{eq:compound-exact-capacity}
\end{equation}
\end{corollary}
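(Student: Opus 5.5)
The corollary follows by combining Proposition~\ref{prop:compound-characterization} with Theorem~\ref{thm:domination}. Start from a compound law \(G=F_{d,\gamma}\circ R\), where \(0<\gamma\leq\frac{1}{d+1}\) and \(R\) is a PGF supported on \(1+d\mathbb Z_{\geq0}\).

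\emph{Support hypothesis.} By the proposition, \(G\) is a proper PGF supported on \(1+d\mathbb Z_{\geq0}\). This follows because every power \(R^{dk+1}\) is supported on \(1+d\mathbb Z_{\geq0}\), since \((dk+1)(1+d\mathbb Z_{\geq0})\subseteq 1+d\mathbb Z_{\geq0}\). Hence \eqref{eq:arithmetic-support} holds.

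\emph{Domination hypothesis.} The implication from the third condition to the first gives
\[
[z^y]G^m\geq\gamma[z^y]G^{m+d}\qquad\forall m,y\geq1,
\]
with some \(\gamma\in(0,1]\). This is exactly \eqref{eq:general-domination}, where we identify \(W_m(y)=[z^y]G(z)^m\) via \eqref{eq:run-pgf}.

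\emph{Zero-error capacity.} Theorem~\ref{thm:domination} now applies and gives \eqref{eq:ze}, that is, \(\C_{0,q}=\log_2\lambda_{q,d}\). The only thing to check is that the constant \(\gamma\) is strictly positive, which holds by construction.

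\emph{Exact capacity.} Assume \(\gamma_d^\star(G)\geq\lambda_{q,d}^{-d}\). The remark after \eqref{eq:optimal-domination-constant} says that the admissible constants form the interval \((0,\gamma_d^\star(G)]\). So \(\gamma=\gamma_d^\star(G)\) is itself admissible, meaning the infimum is attained in the sense that \(G-\gamma^\star G^{d+1}\succeq0\).

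Attainment holds because each coefficient inequality \([z^y]G\geq\gamma[z^y]G^{d+1}\) is preserved in the limit as \(\gamma\uparrow\gamma^\star\). Also \(\gamma^\star>0\), since \(G\) admits positive domination, and \(\gamma^\star\leq1\), because \(\gamma=1\) is excluded.

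Applying the "in particular" clause of Theorem~\ref{thm:domination} with \(\gamma=\gamma^\star\geq\lambda^{-d}\) yields \eqref{eq:compound-exact-capacity}.

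The only mildly delicate point is this attainment, i.e. passing from the infimum to an admissible constant. It is handled coefficientwise as above. Everything else is direct citation.
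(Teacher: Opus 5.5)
Your proposal is correct and matches the paper's proof: apply Theorem~\ref{thm:domination} with any positive admissible domination constant to get \(\C_{0,q}=\log_2\lambda_{q,d}\), then apply it with the maximal constant \(\gamma_d^\star(G)\) to get the exact capacity. Your limiting argument for attainment is valid, but attainment is even more immediate: the infimum in \eqref{eq:optimal-domination-constant} directly gives \([z^y]G\geq\gamma_d^\star(G)\,[z^y]G^{d+1}\) for every \(y\).
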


\begin{IEEEproof}
Any positive admissible domination constant verifies the hypotheses of
Theorem~\ref{thm:domination}, which gives
\eqref{eq:compound-zero-error-capacity}.  If
\(\gamma_d^\star(G)\geq\lambda_{q,d}^{-d}\), use the maximal constant in
that theorem to obtain \eqref{eq:compound-exact-capacity}.
\end{IEEEproof}

The uniqueness assertion in Proposition~\ref{prop:compound-characterization}
is for fixed \(G\) and \(\gamma\).  A given law can admit several
domination constants, with a different \(R\) for each one.  Corollary
\ref{cor:compound-capacity} uses the intrinsic largest constant, but
\eqref{eq:compound-exact-capacity} remains only a sufficient statement
for the general compound class; unlike the primitive result below, it is
not asserted to be an if-and-only-if threshold.

\subsection{An Explicit Power-Law Example}
\label{sec:fuss-catalan-power}

The characterization above is not limited to laws having Fuss--Catalan
coefficients.  Fix \(\alpha>1\), and consider the repetition law
\begin{equation}
    \Prb(K=1+dk)
    =p_k
    \coloneq
    \frac{1}{\zeta(\alpha)(k+1)^\alpha},
    \qquad k\geq0,
    \label{eq:power-law-pmf}
\end{equation}
where \(\zeta(\cdot)\) is the Riemann zeta function.
Its PGF is
\begin{equation}
    G_\alpha(z)
    =
    \frac{1}{\zeta(\alpha)}
    \sum_{k\geq0}\frac{z^{1+dk}}{(k+1)^\alpha}.
    \label{eq:power-law-pgf}
\end{equation}
Put \(r=d+1\), and let \(p^{*r}\) denote the \(r\)-fold convolution of
\((p_k)_{k\geq0}\).  The coefficient of \(z^{1+dk}\) in
\(G_\alpha(z)^r\) is zero for \(k=0\), while for \(k\geq1\) it equals
\begin{equation}
    [z^{1+dk}]G_\alpha(z)^r=p^{*r}_{k-1}.
    \label{eq:power-law-convolution-coefficient}
\end{equation}
For every tuple of nonnegative integers
\(k_1+\cdots+k_r=k-1\), at least one coordinate satisfies
\begin{equation}
    k_i+1
    \geq
    \frac{k+d}{r}
    \geq
    \frac{k+1}{r}.
    \label{eq:power-law-large-coordinate}
\end{equation}
Taking a union bound over the \(r\) possible coordinates, bounding the
corresponding factor by \(r^\alpha(k+1)^{-\alpha}\), and summing the
remaining \(r-1\) coordinates without the composition constraint gives
\begin{align}
\nonumber
    p^{*r}_{k-1}
    &=
    \frac{1}{\zeta(\alpha)^r}
    \sum_{\substack{k_1+\cdots+k_r=k-1\\
                    k_1,\ldots,k_r\geq0}}
    \prod_{i=1}^r(k_i+1)^{-\alpha}\\
\nonumber
    &\leq
    \frac{r^{\alpha+1}}{\zeta(\alpha)^r(k+1)^\alpha}
    \left(\sum_{j\geq0}(j+1)^{-\alpha}\right)^{r-1}\\
    &=r^{\alpha+1}p_k.
\label{eq:power-law-convolution-bound}
\end{align}
Consequently,
\begin{equation}
    G_\alpha(z)
    \succeq
    \gamma_0G_\alpha(z)^{d+1},
    \qquad
    \gamma_0=(d+1)^{-(\alpha+1)}.
    \label{eq:power-law-domination}
\end{equation}
Proposition~\ref{prop:compound-characterization} therefore supplies a
compound representation of this law for \(\gamma=\gamma_0\), even though
its probabilities are given directly by the power law
\eqref{eq:power-law-pmf}.

The estimate \eqref{eq:power-law-domination} is not claimed to be optimal.
By Proposition~\ref{prop:compound-characterization}, its optimal
domination constant can equivalently be written as
\begin{equation}
    \gamma_d^\star(G_\alpha)
    =
    \inf_{k\geq1}\frac{p_k}{p^{*(d+1)}_{k-1}}
    \geq
    (d+1)^{-(\alpha+1)},
    \label{eq:power-law-optimal-constant}
\end{equation}
but a simpler closed form is not needed here.  The certified value already
implies
\begin{equation}
    \C_{0,q}=\log_2\lambda_{q,d},
    \label{eq:power-law-zero-error-capacity}
\end{equation}
for every \(\alpha>1\), and
\begin{equation}
    \C_q=\log_2\lambda_{q,d}
    \quad\text{whenever}\quad
    \lambda_{q,d}^{d}\geq(d+1)^{\alpha+1}.
    \label{eq:power-law-exact-capacity}
\end{equation}
The mean is finite precisely when \(\alpha>2\), in which case
\begin{equation}
    \E K
    =
    1+d\left(
    \frac{\zeta(\alpha-1)}{\zeta(\alpha)}-1
    \right).
    \label{eq:power-law-mean}
\end{equation}
For instance, take \(q=6\), \(d=4\), and \(\alpha=3\).  The left-hand side
of \eqref{eq:lambda-definition} exceeds one at \(\lambda=5\), and hence
\(\lambda_{6,4}>5\).  Therefore,
\begin{equation}
    \gamma_0=5^{-4}>\lambda_{6,4}^{-4},
    \label{eq:power-law-concrete-threshold}
\end{equation}
and this finite-mean power-law channel satisfies
\begin{equation}
    \C_6=\C_{0,6}=\log_2\lambda_{6,4}.
    \label{eq:power-law-concrete-capacity}
\end{equation}

\subsection{Optimality of the Primitive Domination Constant}

We now return to the primitive PGF \(G=G_{d,\beta}\) of
Proposition~\ref{prop:fuss-catalan-law}.  It is the special compound law
in Proposition~\ref{prop:compound-characterization} obtained by taking
\(\gamma=\beta\) and \(R(z)=z\).

\begin{proposition}[Primitive domination and its optimal constant]
\label{prop:fuss-domination}
For every proper law in Proposition~\ref{prop:fuss-catalan-law},
\begin{equation}
    W_m(y)\geq\beta W_{m+d}(y)
    \qquad \forall m,y\geq1.
\label{eq:fuss-domination}
\end{equation}
Furthermore, \(\beta\) is the largest constant that can replace it
uniformly in \(m\) and \(y\).
\end{proposition}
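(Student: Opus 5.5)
The domination inequality follows directly from the functional equation \eqref{eq:algebraic-pgf}. Rearranging gives
\[
G(z)-\beta G(z)^{d+1}=(1-\beta)z\succeq0 .
\]
This is condition \eqref{eq:compound-condition-m-one} with \(\gamma=\beta\). By the factorization \eqref{eq:domination-factorization} in Proposition~\ref{prop:compound-characterization}, we then have
\[
G^m-\beta G^{m+d}=(1-\beta)zG^{m-1}\succeq0 \qquad\text{for every } m\geq1,
\]
which is \eqref{eq:fuss-domination}. Here \(0<\beta\le\frac1{d+1}\), so the law is proper by Proposition~\ref{prop:fuss-catalan-law}.

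For optimality, by the equivalence of the first two conditions in Proposition~\ref{prop:compound-characterization} it suffices to show that \(\gamma_d^\star(G)\le\beta\). Equivalently, we need the ratio \(W_1(y)/W_{d+1}(y)\) to approach \(\beta\) along some sequence of \(y\), or to equal it at some \(y\). Since \(W_1(1+dk)=\beta W_{d+1}(1+dk)+(1-\beta)[k=0]\), the ratio equals exactly \(\beta\) at every \(y=1+dk\) with \(k\geq1\). For instance, at \(y=d+1\) we have \(W_{d+1}(d+1)=(1-\beta)^{d+1}>0\) and \(W_1(d+1)=\beta(1-\beta)^{d+1}\), so the ratio is \(\beta\). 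Hence any constant \(\gamma>\beta\) fails already at \(m=1\), \(y=d+1\).

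One can also check this directly from the explicit formulas \eqref{eq:run-pmf}. There \(W_1(d+1)=\beta(1-\beta)^{d+1}\) is the \(k=1\) term, using the Fuss--Catalan number \(\frac{1}{d+1}\binom{d+1}{1}=1\). Likewise \(W_{d+1}(d+1)=(1-\beta)^{d+1}\) is the \(k=0\) term. Uniformity in \(m\) and \(y\) then shows that \(\beta\) is the largest admissible constant.

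There is no real obstacle here. The only care needed is to confirm that the coefficient identity is exact, with no contribution to \(z^{d+1}\) from the linear term \((1-\beta)z\) when \(d\ge1\), and that \(W_{d+1}(d+1)\) is strictly positive. Both are immediate.
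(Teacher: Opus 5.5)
Your proof is correct and follows essentially the same route as the paper: it multiplies the functional equation by \(G^{m-1}\) to obtain \(W_m(y)-\beta W_{m+d}(y)=(1-\beta)[z^{y-1}]G(z)^{m-1}\geq0\), and it proves optimality from the \(m=1\) identity \(W_1=(1-\beta)\delta_1+\beta W_{d+1}\), which forces \(\gamma\leq\beta\) at every \(y\in\supp(W_{d+1})\), with your \(y=d+1\) being one such point. Citing the factorization in Proposition~\ref{prop:compound-characterization} instead of multiplying directly is harmless, since that proposition does not depend on this one.
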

\begin{IEEEproof}
Multiplying \eqref{eq:algebraic-pgf} by \(G(z)^{m-1}\) gives
\begin{equation}
    G(z)^m
    =
    (1-\beta)zG(z)^{m-1}
    +
    \beta G(z)^{m+d}.
    \label{eq:key-pgf-identity}
\end{equation}
The purpose of this multiplication is to express the defining equation in
terms of the convolution powers corresponding to run lengths \(m\) and
\(m+d\).  Indeed, by definition,
\begin{equation}
    W_j(y)=[z^y]G(z)^j,
\end{equation}
because \(G(z)^j\) is the probability-generating function of the sum of
\(j\) independent repetition counts.  Similarly,
\(zG(z)^{m-1}\) is the generating function of one plus the sum of
\(m-1\) independent repetition counts.

Taking the coefficient of \(z^y\) in
\eqref{eq:key-pgf-identity} therefore yields
\begin{equation}
    W_m(y)
    =
    (1-\beta)[z^{y-1}]G(z)^{m-1}
    +
    \beta W_{m+d}(y).
\end{equation}
All coefficients of \(G(z)\) are nonnegative by
Proposition~\ref{prop:fuss-catalan-law}.  Consequently, every coefficient
of \(G(z)^{m-1}\) is nonnegative as well.  More explicitly, if
\begin{equation}
    G(z)=\sum_{r\geq1}p_r z^r,
    \qquad p_r\geq0,
\end{equation}
then, for \(m\geq2\),
\begin{equation}
    [z^{y-1}]G(z)^{m-1}
    =
    \sum_{\substack{r_1+\cdots+r_{m-1}=y-1\\r_i\geq1}}
    p_{r_1}\cdots p_{r_{m-1}}
    \geq0.
\end{equation}
For \(m=1\), the same conclusion follows from \(G(z)^0=1\).
Since \(1-\beta\geq0\), the preceding coefficient identity implies
\begin{equation}
    W_m(y)-\beta W_{m+d}(y)
    =
    (1-\beta)[z^{y-1}]G(z)^{m-1}
    \geq0.
\end{equation}
This proves \eqref{eq:fuss-domination}.

It remains to show that the constant \(\beta\) is optimal.  Setting
\(m=1\) in \eqref{eq:key-pgf-identity} gives
\begin{equation}
    W_1=(1-\beta)\delta_1+\beta W_{d+1}.
    \label{eq:w1-mixture}
\end{equation}
Equivalently, for every \(y\in\N\),
\begin{equation}
    W_1(y)
    =
    (1-\beta)\mathbf{1}_{\{y=1\}}
    +
    \beta W_{d+1}(y).
\end{equation}
Now
\begin{equation}
    \supp(W_{d+1})
    =
    \{d+1,d+1+d,d+1+2d,\ldots\},
\end{equation}
so \(1\notin\supp(W_{d+1})\).  Hence, for every
\(y\in\supp(W_{d+1})\),
\begin{equation}
    W_1(y)=\beta W_{d+1}(y).
\end{equation}
If a constant \(\gamma\) satisfied
\(W_m(y)\geq\gamma W_{m+d}(y)\) uniformly in \(m\) and \(y\), then taking
\(m=1\) and any \(y\in\supp(W_{d+1})\) would give
\begin{equation}
    \beta W_{d+1}(y)
    =
    W_1(y)
    \geq
    \gamma W_{d+1}(y).
\end{equation}
Since \(W_{d+1}(y)>0\) on its support, this implies
\(\gamma\leq\beta\).  Therefore, \(\beta\) is the largest possible
uniform domination constant.
\end{IEEEproof}

\section{Sharp Capacity for the Primitive Fuss--Catalan Family}
\label{sec:exact-capacity}

\begin{theorem}
\label{thm:main}
Let \(q\geq2\), \(d\geq1\), and
\(0<\beta\leq\frac{1}{d+1}\).  For the repetition law
\eqref{eq:algebraic-pgf} (or \eqref{eq:fuss-catalan-pmf}), the zero-error capacity equals
\begin{equation}
    \C_{0,q}(d,\beta)=\log_2\lambda_{q,d}.
    \label{eq:zero-error-exact}
\end{equation}
The Shannon capacity satisfies
\begin{equation}
 \C_q(d,\beta)=\log_2\lambda_{q,d}
 \quad\Longleftrightarrow\quad
 \beta\geq\lambda_{q,d}^{-d}.
\label{eq:sharp-capacity-characterization}
\end{equation}
For every proper parameter,
\begin{equation}
\begin{split}
 \C_q(d,\beta)
 \leq
 \min\left\{\log_2q,\,
 \max\left\{\log_2\lambda_{q,d},\,
 \frac{1}{d}\log_2\frac{1}{\beta}\right\}\right\}.
\end{split}
\label{eq:fuss-below-threshold-bound}
\end{equation}
\end{theorem}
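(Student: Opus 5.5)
The plan is to obtain almost everything from Theorem~\ref{thm:domination} and Proposition~\ref{prop:fuss-domination}. The only genuinely new part is the ``only if'' direction of \eqref{eq:sharp-capacity-characterization}, for which Proposition~\ref{prop:exact-kl-criterion} will be used.

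\emph{The zero-error claim and the upper bound.} By Proposition~\ref{prop:fuss-catalan-law}, the law \eqref{eq:fuss-catalan-pmf} is supported on \(1+d\mathbb Z_{\geq0}\). By Proposition~\ref{prop:fuss-domination}, it satisfies \eqref{eq:general-domination} with \(\gamma=\beta\in(0,1]\). Theorem~\ref{thm:domination} then gives \eqref{eq:zero-error-exact} directly. With \(\gamma=\beta\), the bound \eqref{eq:general-bound} becomes exactly \eqref{eq:fuss-below-threshold-bound}. The ``if'' direction of \eqref{eq:sharp-capacity-characterization} is \eqref{eq:general-exact} with \(\gamma=\beta\geq\lambda^{-d}\).

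\emph{The converse.} Assume \(\beta<\lambda^{-d}\). By Proposition~\ref{prop:exact-kl-criterion}, it suffices to exhibit a single pair \((r,j)\) violating \eqref{eq:within-residue-kl-condition}. I would take \(r=1\) and \(j=1\), and compute \(\D(W_{d+1}\Vert W_1)\) exactly. The identity \eqref{eq:w1-mixture} shows that on \(\supp(W_{d+1})\), which excludes \(y=1\), one has \(W_1(y)=\beta W_{d+1}(y)\). Hence the likelihood ratio \(W_{d+1}/W_1\) equals the constant \(1/\beta\) on the whole support of \(W_{d+1}\), and
\begin{equation}
\D(W_{d+1}\Vert W_1)=\log_2\frac{1}{\beta}>d\log_2\lambda .
\end{equation}
Equivalently, \(\Delta_{d+1}=\log_2(1/\beta)-d\log_2\lambda>0\). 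The perturbation argument in Proposition~\ref{prop:exact-kl-criterion} therefore yields \(\C_q>\log_2\lambda\). That argument requires only the hypotheses of Theorem~\ref{thm:domination}, which hold here, so the endpoint \(\beta=\frac{1}{d+1}\) with infinite mean causes no difficulty: the perturbed input law still has finite support and finite \(\E L\).

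\emph{Expected obstacle.} The main subtlety is confirming that the chosen pair gives a strict violation. This works because the domination \eqref{eq:fuss-domination} is tight pointwise at \(m=1\) on all of \(\supp(W_{d+1})\), so the inequality \eqref{eq:pointwise-implies-kl} holds with equality there. Everything else is a direct invocation of earlier results. I would also check that \(\lambda^{-d}\leq\frac{1}{d+1}\) need not hold. If the threshold exceeds the proper range, the statement simply says the capacity never equals \(\log_2\lambda\), and the converse argument above covers that case unchanged.
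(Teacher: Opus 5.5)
Your proposal is correct and follows the paper's proof essentially verbatim. Proposition~\ref{prop:fuss-domination} lets you apply Theorem~\ref{thm:domination} with \(\gamma=\beta\), which gives the zero-error value, the upper bound \eqref{eq:fuss-below-threshold-bound}, and the ``if'' direction. Your converse is exactly the paper's Proposition~\ref{prop:sharpness}: \eqref{eq:w1-mixture} yields \(\D(W_{d+1}\Vert W_1)=\log_2(1/\beta)\), so \eqref{eq:within-residue-kl-condition} fails at \(r=j=1\), and Proposition~\ref{prop:exact-kl-criterion} then gives \(\C_q>\log_2\lambda_{q,d}\).
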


\begin{IEEEproof}
By Proposition~\ref{prop:fuss-domination},
Theorem~\ref{thm:domination} applies with \(\gamma=\beta\).  It gives
\eqref{eq:zero-error-exact}, \eqref{eq:fuss-below-threshold-bound}, and
the equality in \eqref{eq:sharp-capacity-characterization} whenever
\(\beta\geq\lambda_{q,d}^{-d}\).  Proposition~\ref{prop:sharpness} below
proves strict inequality when \(\beta<\lambda_{q,d}^{-d}\), completing the
converse implication.
\end{IEEEproof}

\begin{proposition}[Strict improvement below the threshold]
\label{prop:sharpness}
Let \(0<\beta\leq\frac{1}{d+1}\).  If
\(\beta<\lambda_{q,d}^{-d}\), then
\begin{equation}
    \C_q(d,\beta)>\log_2\lambda_{q,d}.
    \label{eq:strict-improvement}
\end{equation}
\end{proposition}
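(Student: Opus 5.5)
The plan is to apply the converse half of Proposition~\ref{prop:exact-kl-criterion}, which is available because Proposition~\ref{prop:fuss-domination} puts the primitive law under the hypotheses of Theorem~\ref{thm:domination} with \(\gamma=\beta\). Once that is in place, it suffices to find a single run length \(\ell_0\) with \(\Delta_{\ell_0}>0\). Equivalently, I need some pair \(1\leq r\leq d\), \(j\geq1\) for which \(\D(W_{r+jd}\Vert W_r)>jd\log_2\lambda_{q,d}\). The natural candidate is \(r=1\), \(j=1\), that is, \(\ell_0=d+1\).

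The key computation uses the mixture identity \eqref{eq:w1-mixture}, namely \(W_1=(1-\beta)\delta_1+\beta W_{d+1}\). Since \(1\notin\supp(W_{d+1})\), we have \(W_1(y)=\beta W_{d+1}(y)\) for every \(y\in\supp(W_{d+1})\). Hence the likelihood ratio \(W_{d+1}/W_1\) equals \(1/\beta\) identically on the support of \(W_{d+1}\), and therefore
\begin{equation*}
  \D(W_{d+1}\Vert W_1)=\sum_{y}W_{d+1}(y)\log_2\frac{1}{\beta}=\log_2\frac{1}{\beta}.
\end{equation*}
By \eqref{eq:delta-within-residue} this gives \(\Delta_{d+1}=\log_2(1/\beta)-d\log_2\lambda_{q,d}\). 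This quantity is strictly positive exactly when \(\beta<\lambda_{q,d}^{-d}\). Proposition~\ref{prop:exact-kl-criterion} then yields \(\C_q(d,\beta)\neq\log_2\lambda_{q,d}\). Since \(\C_q\geq\C_{0,q}=\log_2\lambda_{q,d}\), this is a strict inequality.

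The only point that needs care is that the converse argument in Proposition~\ref{prop:exact-kl-criterion} remains valid at the critical endpoint \(\beta=\frac{1}{d+1}\), where \(\E K=\infty\). This causes no difficulty. The perturbed input law \(P_\epsilon=(1-\epsilon)P_0+\epsilon\delta_{d+1}\) is finitely supported, so Theorem~\ref{thm:run-reduction} applies without any moment assumption. The quantities \(\E_{P_\epsilon}L\) and the divergences involved are all finite. In particular, the likelihood bound \eqref{eq:perturbation-likelihood-bound} holds here with \(B=\lambda_{q,d}/((q-1)\beta)<\infty\). Consequently the rate excess \eqref{eq:general-rate-excess} is \(\epsilon\Delta_{d+1}-O(\epsilon^2)\) divided by a positive bounded denominator, which is positive for small \(\epsilon\).

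I do not expect a genuine obstacle. The essential observation is that the domination constant \(\beta\) is attained with equality on the whole support of \(W_{d+1}\). This is exactly what makes the KL bound \eqref{eq:pointwise-implies-kl} tight, and therefore makes the threshold sharp.
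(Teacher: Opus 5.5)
Your proposal is correct and follows the paper's proof. You use the mixture identity \eqref{eq:w1-mixture} to get \(\D(W_{d+1}\Vert W_1)=\log_2(1/\beta)\), so condition \eqref{eq:within-residue-kl-condition} fails at \(r=1\), \(j=1\), and the converse half of Proposition~\ref{prop:exact-kl-criterion} gives the strict inequality. Your extra check that the perturbation argument survives at the critical endpoint, via the finitely supported \(P_\epsilon\) and \(B=\lambda_{q,d}/((q-1)\beta)\), is a correct elaboration of what the paper covers in its critical-endpoint remark.
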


\begin{IEEEproof}
Write \(\lambda=\lambda_{q,d}\).  By
\eqref{eq:w1-mixture}, one has
\(W_1=\beta W_{d+1}\) on \(\supp(W_{d+1})\), and hence
\begin{equation}
    \D(W_{d+1}\Vert W_1)=\log_2\frac{1}{\beta}.
    \label{eq:primitive-critical-divergence}
\end{equation}
If \(\beta<\lambda^{-d}\), then
\begin{equation}
    \D(W_{d+1}\Vert W_1)>d\log_2\lambda.
    \label{eq:primitive-kl-failure}
\end{equation}
The condition \eqref{eq:within-residue-kl-condition} therefore fails for
\(r=1\) and \(j=1\), and
Proposition~\ref{prop:exact-kl-criterion} gives
\eqref{eq:strict-improvement}.
\end{IEEEproof}

The capacity at the lower threshold is therefore not merely an artifact of a
chosen dual distribution.  Proposition~\ref{prop:fuss-domination} says that
\(\gamma=\beta\) is the best uniform coefficientwise constant, and
the exact KL criterion shows that the capacity itself leaves the zero-error
value at exactly the same point.

\begin{remark}[Critical endpoint]
\textnormal{
\label{rem:critical}
At \(\beta=\frac{1}{d+1}\), every finite input produces a finite output almost
surely, although the expected output length is infinite.  The fixed-block
converse in Lemma~\ref{lem:qary-dual} uses no moment assumption.
Achievability in Theorem~\ref{thm:run-reduction} uses finitely many input run
lengths and a constant-composition code; after finite output quantization it
is an ordinary finite-alphabet DMC argument.  Hence all statements above
remain operationally valid at criticality without invoking a
finite-mean synchronization-channel theorem.  The operational definition
places no constraint on expected output length or decoding delay.
\myqed}
\end{remark}

Since
\begin{equation}
    (q-1)\sum_{r=1}^{d}q^{-r}=1-q^{-d}<1,
\end{equation}
one has \(\lambda_{q,d}<q\).  The upper bound
\eqref{eq:fuss-below-threshold-bound} is therefore nontrivial below the
threshold only when \(\beta>q^{-d}\).  It is a certificate obtained from
the uniform domination argument; it is not asserted to be the optimal
likelihood-ratio bound for the selected mixture.  Richer dual mixtures may
improve it.

\section{Primitive Fuss--Catalan Parameter Regimes and Explicit Examples}
\label{sec:examples}

For the primitive weighted Fuss--Catalan family, the equality interval is
nonempty among proper laws precisely when
\begin{equation}
    \lambda_{q,d}^{d}\geq d+1;
    \label{eq:proper-intersection}
\end{equation}
it contains finite-mean laws precisely when the inequality is strict.
For fixed \(q\), \(\lambda_{q,d}\) is strictly increasing in \(d\): the
function \((q-1)\sum_{r=1}^{d}x^{-r}\) increases with \(d\) and decreases
with \(x\).  This observation gives the complete classification in
Table~\ref{tab:regimes}.  The upper endpoint shown there is critical and has
infinite mean; it should be replaced by a strict inequality when only finite-mean laws
are admitted.

\begin{table}[H]
\centering
\caption{Primitive proper parameters for which
\(\C_q=\C_{0,q}=\log_2\lambda_{q,d}\)}
\label{tab:regimes}
\small
\begin{tabular}{@{}lll@{}}
\toprule
Alphabet and span & Equality parameters & Finite mean?\\
\midrule
\(q=2,\ d=1,2\) & none & --\\
\(q=2,\ d\geq3\) &
\([\lambda_{2,d}^{-d},\,\frac{1}{d+1}]\) & yes, except at \(\beta=\frac{1}{d+1}\)\\
\(q=3,\ d=1\) & \(\{\frac{1}{2}\}\) & no\\
\(q=3,\ d\geq2\) &
\([\lambda_{3,d}^{-d},\,\frac{1}{d+1}]\) & yes, except at \(\beta=\frac{1}{d+1}\)\\
\(q\geq4,\ d\geq1\) &
\([\lambda_{q,d}^{-d},\,\frac{1}{d+1}]\) & yes, except at \(\beta=\frac{1}{d+1}\)\\
\bottomrule
\end{tabular}
\end{table}

When \(q=2\), one has
\(\lambda_{2,1}=1\) and \(\lambda_{2,2}=(1+\sqrt5)/2\), giving
\(\lambda_{2,1}^{1}<2\) and \(\lambda_{2,2}^{2}<3\).
At \(d=3\), \(\lambda_{2,3}^{3}>4\); monotonicity and induction then give
\(\lambda_{2,d}^{d}>d+1\) for all \(d\geq3\).  When \(q=3\),
\(\lambda_{3,1}=2\), so equality in \eqref{eq:proper-intersection} holds
only at criticality for \(d=1\), while strict inequality holds for
\(d\geq2\).  If \(q\geq4\), then
\(\lambda_{q,d}\geq q-1\geq3\), and
\eqref{eq:proper-intersection} is strict for every \(d\).

\subsection{A Catalan Law with \texorpdfstring{\(q=4,d=1\)}{q=4, d=1}}

The simplest finite-mean nonbinary example has \(q=4\), \(d=1\), and
\(\beta=1/3\).  Here \(\lambda_{4,1}=3\), and
\begin{equation}
 \Prb(K=k+1)
 =
 \frac{1}{k+1}\binom{2k}{k}
 \left(\frac{2}{3}\right)^{k+1}
 \left(\frac{1}{3}\right)^k
 \label{eq:q4-d1-pmf}
\end{equation}
for \(k\geq0\).
Thus \(\beta=\lambda_{4,1}^{-1}\), so this example lies exactly at the
equality threshold in Theorem~\ref{thm:main}, with \(\Delta_2=0\).
The mean repetition count is \(\E K=2\), and
\begin{equation}
    \C_4=\log_2 3=1.58496\ldots
    \quad\text{bits per input symbol}.
    \label{eq:q4-d1-capacity}
\end{equation}

\subsection{A Ternary Fuss--Catalan Law}

For \(q=3\) and \(d=2\),
\begin{equation}
    \lambda_{3,2}=1+\sqrt3,\qquad
    \lambda_{3,2}^{-2}
    =1-\frac{\sqrt3}{2}.
    \label{eq:q3-d2-lambda}
\end{equation}
Choosing \(\beta=1/4\) gives
\begin{equation}
\begin{split}
 \Prb(K=2k+1)
 =
 \frac{1}{2k+1}\binom{3k}{k}
 \left(\frac{3}{4}\right)^{2k+1}
 \left(\frac{1}{4}\right)^k
\end{split}
 \label{eq:q3-d2-pmf}
\end{equation}
for \(k\geq0\),
with \(\E K=3\).  Since
\(
 1-\frac{\sqrt3}{2}<\frac{1}{4}<\frac{1}{3}
\),
Theorem~\ref{thm:main} yields
\begin{equation}
    \C_3=\log_2(1+\sqrt3)=1.44998\ldots \quad\text{bits per input symbol}.
    \label{eq:q3-d2-capacity}
\end{equation}

\subsection{The First Binary Equality Example}

For \(q=2\), \(d=3\), and \(\beta=1/5\),
\begin{equation}
 \Prb(K=3k+1)
 =
 \frac{1}{3k+1}\binom{4k}{k}
 \left(\frac45\right)^{3k+1}
 \left(\frac15\right)^k
 \label{eq:binary-d3-pmf}
\end{equation}
for \(k\geq0\).
The mean is \(\E K=4\).  The number \(\lambda_{2,3}\) is the tribonacci
constant,
\begin{equation}
    \lambda_{2,3}=1.83928\ldots,
    \qquad \lambda_{2,3}^{-3}=0.16071\ldots.
\end{equation}
Consequently,
\begin{equation}
    \C_2=\log_2\lambda_{2,3}
    =0.87914\ldots
    \quad\text{bits per input bit}.
    \label{eq:binary-d3-capacity}
\end{equation}

By contrast, for binary \(d=2\),
\(\lambda_{2,2}=\varphi=(1+\sqrt5)/2\) and
\(\varphi^{-2}>1/3\).  The proper interval
\(\beta\leq1/3\) therefore lies wholly below the equality threshold:
\begin{equation}
    \C_2(2,\beta)>\log_2\varphi
    \qquad(0<\beta\leq1/3).
\end{equation}
For \(1/4<\beta\leq1/3\), \eqref{eq:fuss-below-threshold-bound} also gives
the nontrivial sandwich
\begin{equation}
 \log_2\varphi
 <
 \C_2(2,\beta)
 \leq\frac12\log_2\frac{1}{\beta}.
 \label{eq:binary-d2-sandwich}
\end{equation}
At \(\beta=1/3\), the upper bound is
\(\frac12\log_2 3=0.79248\ldots\).

\section{Conclusion}

We identified a general coefficientwise condition that determines the
zero-error capacity of a \(q\)-ary sticky channel and, in an explicit
parameter regime, makes it equal to the Shannon capacity.  More generally,
the exact KL criterion in Proposition~\ref{prop:exact-kl-criterion}
characterizes when the modulo-\(d\) input law is Shannon-capacity
achieving.  In the exact-capacity regime, the coefficientwise condition
converts a relation among convolution powers of the repetition-count
distribution into an exactly tight capacity-per-unit-cost dual
distribution.

For each fixed admissible domination constant
\(0<\gamma\leq \frac{1}{d+1}\), we also characterized all laws satisfying the
condition.  They are precisely the compound weighted Fuss--Catalan laws
\begin{equation}
    G(z)=F_{d,\gamma}(R(z)),
    \label{eq:conclusion-compound}
\end{equation}
where \(R\) is an arbitrary PGF supported on \(1+d\Z_{\geq0}\).  This characterization
includes explicit laws with power-law tails, such as the power-law example
in Section~\ref{sec:fuss-catalan-power}.

For the primitive weighted Fuss--Catalan family
\begin{equation}
    G(z)=(1-\beta)z+\beta G(z)^{d+1},
    \label{eq:conclusion-primitive}
\end{equation}
the uniform domination constant is exactly \(\beta\).  This yields
\begin{equation}
    \C_q=\C_{0,q}=\log_2\lambda_{q,d}
    \label{eq:conclusion-capacity}
\end{equation}
if and only if \(\beta\geq\lambda_{q,d}^{-d}\) within the proper family.
Below the threshold, a run-length perturbation proves strict separation, and
a second explicit dual gives an analytic upper bound.  The fixed-block proof
also covers the proper critical law, despite its infinite mean.

For a nonprimitive compound law, the intrinsic constant
\(\gamma_d^\star(G)\) gives a checkable sufficient condition but is not known to describe the sharp capacity threshold.  Natural open problems are to
compute this constant for important subclasses and to determine when the
exact KL criterion can hold even though
\(\gamma_d^\star(G)<\lambda_{q,d}^{-d}\).
Sharpening the below-threshold dual and determining the capacity throughout
that regime, including the binary primitive \(d=2\) family, also remain
open.

\section*{Acknowledgment}


During the preparation of this work, the author used OpenAI's ChatGPT to support research, organize literature, and improve the language and \LaTeX\ presentation. The author reviewed and edited all content, and takes full responsibility for the content of the article.

\IfFileExists{IEEEtran.cls}{%
  \appendix[Proof of \eqref{eq:M0}]
}{%
  \appendix
  \section*{Appendix: Proof of \eqref{eq:M0}}
}

Write
\begin{equation}
    \ell_i=r_i+d u_i,
    \qquad u_i\geq0,
\end{equation}
and let \(j=\sum_{i=1}^m u_i\).  The reduced word associated with the
signature is
\begin{equation}
    \bar{x}=a_1^{r_1}\cdots a_m^{r_m}.
\end{equation}
Its length is
\begin{equation}
    \sum_{i=1}^m r_i
    =
    \sum_{i=1}^m\ell_i
    -
    d\sum_{i=1}^m u_i
    =
    n-jd,
\end{equation}
and all its run lengths belong to \(\{1,\ldots,d\}\).  This reduced word is
uniquely determined by the signature.  Conversely, every word of length
\(n-jd\) whose run lengths belong to \(\{1,\ldots,d\}\) determines a
signature realized at length \(n\): one may add \(jd\) symbols to its first
run.  Therefore, signatures realized at length \(n\) are in bijection with
the disjoint union of constrained words of lengths
\begin{equation}
    n,n-d,n-2d,\ldots
\end{equation}
that are positive.

Let \(T_0=1\), and for \(k\geq1\) define
\begin{equation}
    T_k
    =
    \sum_{\substack{h\geq1,\ r_1+\cdots+r_h=k\\
                    1\leq r_i\leq d}}
    (q-1)^h.
\end{equation}
Thus \(T_k\) is the weighted number of compositions of \(k\) with parts in
\(\{1,\ldots,d\}\), where a composition with \(h\) parts has weight \((q-1)^h\).
It satisfies
\begin{equation}
    T_k
    =
    (q-1)\sum_{r=1}^{\min\{d,k\}}T_{k-r}.
\end{equation}
For a fixed composition with \(h\) parts, there are
\(
    q (q-1)^{h-1}
\)
compatible run-symbol sequences: the first run symbol can be chosen in
\(q\) ways and each subsequent run symbol in \(q-1\) ways.  Hence the
number of constrained words of length \(k\) is \(\frac{q}{q-1}T_k\).  The preceding
bijection now gives the exact identity
\begin{equation}
    M_0(n)
    =
    \frac{q}{q-1}
    \sum_{j=0}^{\lfloor(n-1)/d\rfloor}T_{n-jd}.
\end{equation}

It remains to determine the exponential growth rate of this quantity.  The
generating function of \(T_k\) is
\begin{equation}
    \sum_{k\geq0}T_kz^k
    =
    \frac{1}{1-(q-1)\sum_{r=1}^d z^r}.
\label{eq:Tk}
\end{equation}
Let \(\rho=\lambda^{-1}\).  By
\eqref{eq:lambda-definition}, the denominator in \eqref{eq:Tk} vanishes
at \(z=\rho\).  On the other hand, for \(|z|<\rho\),
\begin{equation}
    \left|(q-1)\sum_{r=1}^d z^r\right|
    \leq
    (q-1)\sum_{r=1}^d|z|^r
    <
    (q-1)\sum_{r=1}^d\rho^r
    =
    1.
    \label{eq:Tk-radius-bound}
\end{equation}
Thus the radius of convergence is exactly \(\rho=\lambda^{-1}\).
The Cauchy--Hadamard formula therefore gives
\begin{equation}
    \limsup_{k\to\infty}T_k^{1/k}
    =
    \lambda.
\end{equation}
Since \(M_0(n)\geq\frac{q}{q-1}T_n\), it follows that
\begin{equation}
\label{eq:M0lower}
    \limsup_{n\to\infty}
    \frac{1}{n}\log_2M_0(n)
    \geq
    \log_2\lambda.
\end{equation}

For the reverse inequality, we first show by induction that
\begin{equation}
    T_k\leq\lambda^k
    \qquad\forall k\geq0.
\end{equation}
The assertion is immediate for \(k=0\).  If it holds for all smaller
indices, then
\begin{align}
\nonumber
    T_k
    &=
    (q-1)\sum_{r=1}^{\min\{d,k\}}T_{k-r}\\
\nonumber
    &\leq
    (q-1)\sum_{r=1}^{\min\{d,k\}}\lambda^{k-r}\\
    &\leq
    (q-1)\sum_{r=1}^d\lambda^{k-r}
    =
    \lambda^k,
\end{align}
where the last equality follows from \eqref{eq:lambda-definition}.  If
\(\lambda>1\), the exact counting identity consequently gives
\begin{equation}
    M_0(n)
    \leq
    \frac{q}{q-1}
    \sum_{j\geq0}\lambda^{n-jd}
    =
    \frac{q}
    {(q-1)\bigl(1-\lambda^{-d}\bigr)}
    \lambda^n.
\end{equation}
Hence
\begin{equation}
\label{eq:M0upper}
    \limsup_{n\to\infty}
    \frac{1}{n}\log_2M_0(n)
    \leq
    \log_2\lambda.
\end{equation}

The remaining case is \(\lambda=1\).  From
\eqref{eq:lambda-definition}, this occurs only when
\(q-1=d=1\), or equivalently when \((q,d)=(2,1)\).  In this case
\(T_k=1\) for every \(k\), and the exact counting identity becomes
\begin{equation}
    M_0(n)=2n.
\end{equation}
Therefore,
\begin{equation}
\label{eq:M0upper2}
    \limsup_{n\to\infty}
    \frac{1}{n}\log_2M_0(n)
    =
    0
    =
    \log_2\lambda.
\end{equation}

Combining \eqref{eq:M0lower}, \eqref{eq:M0upper}, and \eqref{eq:M0upper2} proves
\begin{equation}
    \C_{0,q}=\limsup_{n\to\infty}
    \frac{1}{n}\log_2M_0(n)=\log_2\lambda.
\end{equation}

\balance

\end{document}